\documentclass[12pt]{article}
\usepackage{amsmath,amssymb,amsthm,mathtools}
\usepackage[margin=1in]{geometry}
\usepackage[numbers,sort&compress]{natbib}
\usepackage[colorlinks=true,linkcolor=blue,citecolor=blue,urlcolor=blue]{hyperref}

\newtheorem{theorem}{Theorem}[section]
\newtheorem{lemma}[theorem]{Lemma}
\newtheorem{proposition}[theorem]{Proposition}
\newtheorem{corollary}[theorem]{Corollary}
\theoremstyle{definition}
\newtheorem{definition}[theorem]{Definition}
\newtheorem{assumption}[theorem]{Assumption}
\theoremstyle{remark}
\newtheorem{remark}[theorem]{Remark}
\newtheorem{example}[theorem]{Example}
\newcommand{\cA}{\mathcal{A}}
\newcommand{\cH}{\mathcal{H}}
\newcommand{\cL}{\mathcal{L}}
\newcommand{\Tr}{\operatorname{Tr}}
\newcommand{\supp}{\operatorname{supp}}
\newcommand{\Id}{\mathrm{I}}
\newcommand{\normone}[1]{\|#1\|_1}
\newcommand{\normtwo}[1]{\|#1\|_2}
\newcommand{\diag}{\operatorname{diag}}
\newcommand{\Span}{\operatorname{Span}}
\newcommand{\rank}{rank}
\title{Entropy Moduli and Support-Sensitive BKM Coercivity for Rank-Deficient Non-Commutative Markov Semigroups}
\author{Hassan Nasreddine\thanks{\texttt{hassan.nasreddine@hotmail.com}}}
\date{}

\begin{document}
\maketitle

\begin{abstract}
We study entropy--coherence relations near rank-deficient support boundaries
in finite-dimensional quantum systems.
For block-diagonal reference states, we establish a support-sensitive
coercivity estimate in which the entropy cost of cross-boundary coherence
acquires a logarithmic enhancement determined by the small population scale
near the boundary.

Combined with a finite-time entropy budget, this yields conditional
entropy--activation bounds of order
\[
e^{-\alpha t}(1+\alpha t)^{-1/2}
\]
within a coherence-dominant regime.
The proof proceeds through a pinching reduction and effective \(2\times2\)
BKM estimates adapted to the coherence--population structure.
For Davies semigroups satisfying additional secular and population-rate
assumptions, the resulting bounds apply along trajectories for which the
local support and coherence hypotheses remain valid.
\end{abstract}

\section{Introduction}\label{sec:intro}

Quantum Markov semigroups provide the standard setting for dissipative
dynamics in finite-dimensional quantum systems. In the Davies setting they arise
from weak-coupling limits of open systems and describe thermalization,
decoherence, and entropy production. A basic problem is to convert entropy
control into control of a physically relevant deviation from a reference state.

For faithful reference states this relation is typically governed by quadratic
stability estimates such as Pinsker inequalities or local entropy expansions.
The situation is different for rank-deficient reference states. Such states
occur naturally in the presence of symmetries, superselection rules,
zero-temperature limits, or effective low-energy projections. Writing
\(P=\supp(\sigma)\) and \(Q=I-P\), states near the support boundary may have
small \(Q\)-population
\[
\varepsilon_Q(\rho)=\Tr(Q\rho Q)
\]
together with a nontrivial coherence block \(P\rho Q\). In this regime the
entropy contribution associated with cross-boundary coherence is not captured
well by support-insensitive trace-distance estimates.

A central ingredient is a boundary coercivity estimate for the coherence
entropy associated with the pinching
\[
\Pi\rho=P\rho P+Q\rho Q.
\]
Under a local lower bound on the \(P\)-block and sufficiently small
\(Q\)-population, one obtains a lower bound of the form
\[
D(\rho\|\Pi\rho)
\gtrsim
\|P\rho Q\|_2^2
\log\!\left(\frac{a_0}{\varepsilon_Q(\rho)}\right).
\]
Thus cross-boundary coherence carries a logarithmically enhanced entropy cost
near the support boundary. In coherence-dominant regimes, inversion of the
modulus \(x^2\log(C/x)\) yields entropy-to-activation estimates with a
correction factor of the form
\[
\frac{e^{-\alpha t}}{\sqrt{1+\alpha t}}.
\]
This correction reflects the structure of the entropy modulus and should be
interpreted as a certification consequence rather than as an intrinsic
convergence-rate statement for the underlying dynamics. The same logarithmic
coherence cost also enters relative entropy with respect to faithful
block-diagonal references, including the regularized states
\(\sigma_\varepsilon\) used later.

A key step is a reduction to effective \(2\times2\) BKM blocks via pinching and
data processing. These estimates do not rely on a particular dynamics. The
associated activation functional is
\[
\mathcal A(\rho)
=
\sqrt{\|P\rho Q\|_2^2+\Tr(Q\rho Q)}.
\]
Since block-diagonal states may have nonzero \(Q\)-population while
\(D(\rho\|\Pi\rho)=0\), entropy control alone does not determine
\(\mathcal A(\rho)\). Additional population or coherence information is
therefore required near the support boundary.

The structural estimates are applied to Davies semigroups under additional local
assumptions. Secular decoupling and population estimates provide dynamical
inputs verifying the hypotheses of the entropy bounds on finite time windows.
These results do not constitute general mixing-time or MLSI bounds for Davies
generators.

Sections~\ref{sec:setup}--\ref{sec:BKM} develop the structural entropy
estimates and the associated coercivity bounds.
Section~\ref{sec:entropy-activation} derives entropy-to-activation estimates.
Section~\ref{sec:main} applies these results to Davies semigroups under
additional assumptions. The remaining sections discuss explicit regimes,
comparisons with recoverability bounds, and related consequences.
\subsection{Structure}

Sections~\ref{sec:setup}--\ref{sec:activation} introduce notation and the activation
functional. Sections~\ref{sec:Davies}--\ref{sec:BKM} develop the entropy-budget and
coherence estimates together with the support-sensitive BKM coercivity mechanism.
Section~\ref{sec:entropy-activation} gives the structural no-go theorem,
conditional bound, split bound, and absorbing bound.
Section~\ref{sec:main} applies these results to Davies semigroups under
additional assumptions.
Sections~\ref{sec:window}--\ref{sec:corollaries} give explicit windows and corollaries.
Section~\ref{sec:cps} develops the coherence--population separation theorem and its
comparison with recoverability bounds. Section~\ref{sec:literature} places the work
in context. Section~\ref{sec:discussion} discusses scope and limitations.

\section{Setting and Notation}\label{sec:setup}

Let \(\cH\) be a finite-dimensional Hilbert space with \(d=\dim\cH\).
Let \(\sigma\) be a density matrix of rank \(r<d\), and write
\[
P=\supp(\sigma),
\qquad
Q=\Id-P.
\]
Then \(\sigma|_{P\cH}>0\).

For \(\varepsilon\in(0,1)\), define
\begin{equation}
\sigma_\varepsilon
:=
(1-\varepsilon)\sigma+\frac{\varepsilon}{d}\Id.
\label{eq:sigma-eps}
\end{equation}
The state \(\sigma_\varepsilon\) is faithful and block-diagonal relative to
\(P\oplus Q\), with
\[
Q\sigma_\varepsilon Q=\frac{\varepsilon}{d}Q.
\]

Relative to the decomposition \(P\oplus Q\),
\begin{equation}
\rho
=
\begin{pmatrix}
A & B\\
B^* & C
\end{pmatrix},
\qquad
A=P\rho P,
\quad
B=P\rho Q,
\quad
C=Q\rho Q.
\label{eq:block}
\end{equation}

Davies generators and the associated secular structure are reviewed in
Section~\ref{sec:Davies}.

\section{The Boundary-Activation Functional}\label{sec:activation}

\begin{definition}\label{def:activation}
For any state \(\rho\) with block decomposition~\eqref{eq:block}, define
\begin{align}
c(\rho)
&:=
\normtwo{P\rho Q}^2
=
\normtwo{B}^2,
\label{eq:c-def}\\
\varepsilon_Q(\rho)
&:=
\Tr(Q\rho Q)
=
\Tr(C),
\label{eq:epsQ-def}\\
\cA(\rho)
&:=
\sqrt{c(\rho)+\varepsilon_Q(\rho)}.
\label{eq:A-def}
\end{align}
The coherence-dominance ratio is
\[
R(\rho)^2
:=
\frac{c(\rho)}{\cA(\rho)^2}.
\]
\end{definition}

One has \(\cA(\rho)=0\) if and only if \(B=0\) and \(C=0\), equivalently,
\(\rho\) is supported on \(P\cH\). The Schur complement inequality
\[
BB^*\le \|A\|_\infty C
\]
implies
\[
c(\rho)\le \|A\|_\infty \varepsilon_Q(\rho).
\]
Hence \(c(\rho)>0\) implies \(\varepsilon_Q(\rho)>0\). The associated
logarithmic entropy cost is quantified in Lemma~\ref{lem:BKM}.

\section{Davies Estimates and Coherence Windows}
\label{sec:Davies}

This section records the Davies-semigroup estimates used in the
subsequent entropy--activation analysis.

\begin{assumption}[Secular \(P\)-\(Q\) decoupling]
\label{ass:secular}
Assume that \(P\) and \(Q\) are unions of energy eigenspaces of \(H\).
There exist energy eigenbases \(\{|p\rangle\}_{p\in P}\) of \(P\cH\) and
\(\{|e\rangle\}_{e\in Q}\) of \(Q\cH\) such that
\begin{equation}
\cL(|p\rangle\langle e|)
=
(-\Gamma_{pe}-i\omega_{pe})
|p\rangle\langle e|,
\qquad
\Gamma_{pe}\ge0,
\qquad
\omega_{pe}=E_p-E_e .
\label{eq:secular}
\end{equation}
Set
\[
\Gamma_{\max}:=\max_{p,e}\Gamma_{pe}.
\]
\end{assumption}

Assumption~\ref{ass:secular} is the secular structure used in the Davies
applications below.

\begin{assumption}[Cross-boundary rate parameters]
\label{ass:rates}
The classical transition rates are
\[
W_{nm}
:=
\sum_\alpha
\gamma_\alpha(E_m-E_n)
|\langle n|S_\alpha|m\rangle|^2 ,
\]
where \(W_{nm}\) denotes the rate from \(m\) to \(n\). Define
\begin{equation}
\mu
:=
\max_{p\in P}\sum_{e\in Q}W_{ep},
\qquad
\eta
:=
\min_{e\in Q}\sum_{p\in P}W_{pe},
\label{eq:rates}
\end{equation}
and assume \(\mu+\eta>0\). Set
\[
k:=\mu+\eta,
\qquad
\bar\varepsilon:=\frac{\mu}{\mu+\eta}.
\]
\end{assumption}

\begin{proposition}[Davies estimates]
\label{prop:Davies}
Under Assumptions~\ref{ass:secular}--\ref{ass:rates}, for all \(t\ge0\),
\begin{align}
c(\rho_t)
&\ge
e^{-2\Gamma_{\max}t}c(\rho_0),
\label{eq:c-lower}\\
\varepsilon_Q(\rho_t)
&\le
e^{-kt}\varepsilon_Q(\rho_0)
+
\bar\varepsilon(1-e^{-kt}).
\label{eq:eps-upper}
\end{align}
\end{proposition}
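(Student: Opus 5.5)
Both estimates reduce to scalar differential inequalities. The coherence bound~\eqref{eq:c-lower} comes from the secular structure of Assumption~\ref{ass:secular}, and the population bound~\eqref{eq:eps-upper} comes from the classical rate equation for the diagonal populations, compared against a two-level birth--death process.

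\emph{Coherence bound.} Expand \(B_t=P\rho_tQ\) in the energy eigenbases of Assumption~\ref{ass:secular} as \(B_t=\sum_{p,e} b_{pe}(t)|p\rangle\langle e|\). The Davies generator is covariant under the modular/energy grading. Together with the secular assumption~\eqref{eq:secular}, this means that the Heisenberg-dual action preserves the span of \(|p\rangle\langle e|\) for each Bohr frequency. We then need to rule out feeding of \(b_{pe}\) from diagonal or \(PP\)/\(QQ\) entries. For this we use that \(P,Q\) are unions of energy eigenspaces and that Davies generators map the off-diagonal sector \(P\cdot Q\) into itself: since \(\cL\) commutes with the pinching by \(P\) and \(Q\) in the secular setting, \(P\cL(\rho)Q=\cL(P\rho Q)\). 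Hence
\[
b_{pe}(t)=e^{-(\Gamma_{pe}+i\omega_{pe})t}b_{pe}(0),
\]
so
\[
c(\rho_t)=\sum_{p,e}e^{-2\Gamma_{pe}t}|b_{pe}(0)|^2\ge e^{-2\Gamma_{\max}t}c(\rho_0).
\]
The only delicate point is justifying \(P\cL(\rho)Q=\cL(P\rho Q)\). I will deduce it from \eqref{eq:secular} read as a statement about the action of \(\cL\) on all of \(P\mathcal{B}(\cH)Q\), combined with the standard fact that Davies generators commute with \(\mathrm{ad}_H\) and hence with the spectral projections of \(P,Q\) (unions of energy eigenspaces).

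\emph{Population bound.} The Davies generator decouples diagonal from off-diagonal entries in the energy basis. Hence the populations \(p_n(t)=\langle n|\rho_t|n\rangle\) obey the Pauli master equation
\[
\dot p_n=\sum_m \bigl(W_{nm}p_m-W_{mn}p_n\bigr).
\]
Summing over \(n=e\in Q\), the transitions internal to \(Q\) cancel, leaving
\[
\dot\varepsilon_Q=\sum_{e\in Q}\sum_{p\in P}W_{ep}p_p-\sum_{e\in Q}\Bigl(\sum_{p\in P}W_{pe}\Bigr)p_e .
\]
The inflow term is at most \(\mu\sum_p p_p=\mu(1-\varepsilon_Q)\), and the outflow term is at least \(\eta\,\varepsilon_Q\), by the definitions~\eqref{eq:rates}. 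This gives
\[
\dot\varepsilon_Q\le \mu-(\mu+\eta)\varepsilon_Q=k(\bar\varepsilon-\varepsilon_Q).
\]
Grönwall's lemma, applied via \(\frac{d}{dt}\bigl(e^{kt}(\varepsilon_Q-\bar\varepsilon)\bigr)\le0\), yields \eqref{eq:eps-upper}.

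\emph{Main obstacle.} The main obstacle is the first step: justifying that the \(PQ\)-block evolves autonomously. This requires that the off-diagonal secular structure~\eqref{eq:secular} holds for all pairs \((p,e)\) and that no \(PQ\) coherence is generated from other blocks. Degenerate Bohr frequencies could a priori couple different \(|p\rangle\langle e|\) to one another. Assumption~\ref{ass:secular} as stated, namely that each \(|p\rangle\langle e|\) is an eigenvector of \(\cL\), rules this out on \(P\mathcal{B}(\cH)Q\). Feeding from the diagonal blocks is excluded by the block-covariance argument above, and I will invoke that argument explicitly there.
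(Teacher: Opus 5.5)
\textbf{Overall route and where the gap is.} Your route is the paper's: expand $P\rho_tQ$ in the eigenoperators $|p\rangle\langle e|$, then for populations sum the Pauli equation over $Q$, bound inflow by $\mu(1-\varepsilon_Q)$ and outflow by $\eta\varepsilon_Q$, and apply Gr\"onwall. The population half is fine, granting (as the paper also does tacitly) that $H$ is nondegenerate so the Pauli equation is closed.

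The gap is the step you single out, $P\cL(\rho)Q=\cL(P\rho Q)$, and your justification for it is not valid. Covariance $\cL\circ\mathrm{ad}_H=\mathrm{ad}_H\circ\cL$ only says that $\cL$ commutes with the Bohr-frequency projections $X\mapsto\sum_{E-E'=\nu}\Pi_EX\Pi_{E'}$. But $P$ and $Q$ are spectral projections of $H$, not of $\mathrm{ad}_H$. The block map $X\mapsto PXQ$ is not a sum of Bohr projections once some cross-boundary frequency $E_p-E_e$ is also the Bohr frequency of another pair of levels. Here is what is and is not excluded:
\begin{itemize}
\item Covariance rules out feeding of $P\rho Q$ from populations, since frequency $0$ is never cross-boundary.
\item Assumption~\ref{ass:secular} together with Hermiticity preservation rules out feeding from $Q\rho P$.
\item Nothing rules out feeding from coherences $|x\rangle\langle y|$, with $x,y$ both in $P$ or both in $Q$, whose frequency $E_x-E_y$ equals some $E_p-E_e$. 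Assumption~\ref{ass:secular} constrains $\cL$ only on $P\mathcal{B}(\cH)Q$, not on the diagonal blocks.
\end{itemize}

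\textbf{Counterexample.} This step cannot be derived from Assumptions~\ref{ass:secular}--\ref{ass:rates} as written. Take $E_0=0$, $E_1=1$, $E_2=2$, $P=|0\rangle\langle0|$. Use the zero-temperature Davies generator with jump operators $\sqrt\gamma\,(|0\rangle\langle1|+|1\rangle\langle2|)$ (Bohr frequency $1$) and $\sqrt\gamma\,|0\rangle\langle2|$ (Bohr frequency $2$).
\begin{itemize}
\item One has $\cL(|0\rangle\langle1|)=(i-\gamma/2)|0\rangle\langle1|$ and $\cL(|0\rangle\langle2|)=(2i-\gamma)|0\rangle\langle2|$. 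So Assumption~\ref{ass:secular} holds with $\Gamma_{\max}=\gamma$, and Assumption~\ref{ass:rates} holds with $\mu=0$, $\eta=\gamma$.
\item Because $E_0-E_1=E_1-E_2$, one finds $\langle0|\cL(\rho)|1\rangle=(i-\gamma/2)\rho_{01}+\gamma\rho_{12}$.
\item Take $\rho_0=\tfrac13\Id+b(|0\rangle\langle1|+|1\rangle\langle0|)-b(|1\rangle\langle2|+|2\rangle\langle1|)$ with $0<b<1/(3\sqrt2)$. Then $\rho_{02}(t)\equiv0$ and $\frac{d}{dt}\bigl(e^{2\gamma t}c(\rho_t)\bigr)\big|_{t=0}=-\gamma b^2<0$.
\end{itemize}
So \eqref{eq:c-lower} fails for small $t>0$.

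\textbf{What is needed.} The paper's own proof passes over this point silently: it writes $P\rho_tQ$ as the evolved $P\rho_0Q$ without comment. You were right to flag it, but it has to enter as an extra hypothesis. Two options:
\begin{itemize}
\item A frequency-separation condition: no $E_p-E_e$ equals a Bohr frequency of two levels both in $P$ or both in $Q$. Under this condition your covariance argument does go through.
\item Directly assume $P\cL(X)Q=0$ whenever $PXQ=0$. This holds in Section~\ref{sec:explicit-example}, where every jump operator has range in $P\cH$.
\end{itemize}
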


\begin{proof}
Write
\[
P\rho_0Q
=
\sum_{p,e}c_{pe}|p\rangle\langle e|.
\]
By Assumption~\ref{ass:secular},
\[
e^{t\cL}(|p\rangle\langle e|)
=
e^{(-\Gamma_{pe}-i\omega_{pe})t}|p\rangle\langle e|.
\]
Hence
\[
P\rho_tQ
=
\sum_{p,e}
c_{pe}
e^{-\Gamma_{pe}t}
e^{-i\omega_{pe}t}
|p\rangle\langle e|,
\]
and therefore
\[
c(\rho_t)
=
\sum_{p,e}|c_{pe}|^2e^{-2\Gamma_{pe}t}
\ge
e^{-2\Gamma_{\max}t}c(\rho_0).
\]

For the population term, the assumptions imply that populations evolve
according to a closed classical master equation. Summing over \(e\in Q\),
\[
\dot\varepsilon_Q
\le
\mu(1-\varepsilon_Q)-\eta\varepsilon_Q
=
\mu-k\varepsilon_Q .
\]
Grönwall's inequality gives~\eqref{eq:eps-upper}.
\end{proof}

\begin{corollary}[Ratio lower bound]
\label{cor:ratio}
Let \(R(t):=R(\rho_t)\). Then
\begin{equation}
R(t)^2
\ge
\frac{
e^{-2\Gamma_{\max}t}c(\rho_0)
}{
e^{-2\Gamma_{\max}t}c(\rho_0)
+
e^{-kt}\varepsilon_Q(\rho_0)
+
\bar\varepsilon(1-e^{-kt})
}.
\label{eq:R-lower}
\end{equation}
\end{corollary}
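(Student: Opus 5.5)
The ratio bound follows from Proposition~\ref{prop:Davies} and the monotonicity of the map $(x,y)\mapsto x/(x+y)$ on $x\ge0$, $y\ge0$, $x+y>0$. This map is nondecreasing in $x$ and nonincreasing in $y$. By Definition~\ref{def:activation},
\[
R(t)^2=\frac{c(\rho_t)}{c(\rho_t)+\varepsilon_Q(\rho_t)},
\]
so $R(t)^2=f(c(\rho_t),\varepsilon_Q(\rho_t))$ with $f(x,y)=x/(x+y)$. The task is therefore to replace $x$ by a lower bound and $y$ by an upper bound.

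\textbf{Step 1 (degenerate case).} If $c(\rho_0)=0$, the numerator on the right-hand side of~\eqref{eq:R-lower} vanishes. Provided the denominator is positive, the right-hand side is $0\le R(t)^2$ and there is nothing to prove. I will assume $c(\rho_0)>0$ from now on. The lower bound~\eqref{eq:c-lower} then gives $c(\rho_t)\ge e^{-2\Gamma_{\max}t}c(\rho_0)>0$. Hence $\cA(\rho_t)>0$, so $R(t)$ is well defined, and the denominator of~\eqref{eq:R-lower} is strictly positive.

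\textbf{Step 2 (monotonicity).} Set $x_t=c(\rho_t)$, $y_t=\varepsilon_Q(\rho_t)$, $\underline{x}=e^{-2\Gamma_{\max}t}c(\rho_0)$, and let $\bar y$ be the right-hand side of~\eqref{eq:eps-upper}, which is nonnegative. Two inequalities give the result:
\[
f(x_t,y_t)\ge f(x_t,\bar y)\ge f(\underline{x},\bar y).
\]
The first uses $y_t\le\bar y$ together with $\partial_y f=-x/(x+y)^2\le0$. The second uses $x_t\ge\underline{x}>0$ together with $\partial_x f=y/(x+y)^2\ge0$. Alternatively, I can cross-multiply directly: the claimed inequality is equivalent to $x_t(\underline{x}+\bar y)\ge \underline{x}(x_t+y_t)$, that is, $x_t\bar y\ge \underline{x}\,y_t$. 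This holds because $x_t\ge\underline{x}\ge0$ and $\bar y\ge y_t\ge0$.

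\textbf{Main obstacle.} The only genuine care is bookkeeping the positivity of the denominators, including the case $\bar y=0$; Step 1 handles this. The rest is immediate from Proposition~\ref{prop:Davies}. The substantive inputs, namely the secular decoupling in~\eqref{eq:c-lower} and the Grönwall estimate in~\eqref{eq:eps-upper}, are already established there and are used here as black boxes.
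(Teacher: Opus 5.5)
Your proposal is correct and takes the same approach as the paper: plug the two estimates of Proposition~\ref{prop:Davies} into $R(t)^2=c(\rho_t)/(c(\rho_t)+\varepsilon_Q(\rho_t))$ and use that $x/(x+y)$ is nondecreasing in $x$ and nonincreasing in $y$. Your handling of the degenerate case $c(\rho_0)=0$ and the cross-multiplied form $x_t\bar y\ge \underline{x}\,y_t$ only make explicit what the paper's one-line proof leaves implicit.
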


\begin{proof}
Apply Proposition~\ref{prop:Davies} and use that \(x/(x+y)\) is increasing in
\(x\) and decreasing in \(y\).
\end{proof}

\section{BKM Coercivity Lemma}\label{sec:BKM}

\begin{assumption}[Local support-block invertibility]\label{ass:local}
There exist $a_0 > 0$ and $\delta_0 > 0$ such that
$\normone{\rho-\sigma_\varepsilon} \le \delta_0$ implies $P\rho P \ge a_0 P$.
\end{assumption}

\begin{lemma}[Local invertibility from proximity]\label{lem:local-inv}
Assumption~\ref{ass:local} holds with
$a_0 = \lambda_{\min}(P\sigma_\varepsilon P)/2$
and any $\delta_0 \le \lambda_{\min}(P\sigma_\varepsilon P)/2$.
\end{lemma}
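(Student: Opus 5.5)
The plan is a standard Weyl-type perturbation argument on the support block, using that compression by $P$ is contractive in operator norm and that operator norm is dominated by trace norm. Set $\lambda_\star:=\lambda_{\min}(P\sigma_\varepsilon P)$, where the minimum is taken over $P\cH$. Note that $\lambda_\star>0$: $\sigma_\varepsilon$ is faithful, and $P\sigma_\varepsilon P=(1-\varepsilon)\sigma|_{P\cH}+\frac{\varepsilon}{d}P$ restricted to $P\cH$ is positive definite, since $\sigma|_{P\cH}>0$. So $a_0=\lambda_\star/2>0$ is a legitimate constant.

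Suppose $\normone{\rho-\sigma_\varepsilon}\le\delta_0\le\lambda_\star/2$. Write
\[
P\rho P=P\sigma_\varepsilon P+P(\rho-\sigma_\varepsilon)P .
\]
For the perturbation term, use $\|PXP\|_\infty\le\|X\|_\infty\le\normone{X}$ for self-adjoint $X$. The first inequality holds because $P$ is a projection. The second holds because the largest singular value is at most the sum of all singular values. This gives the operator inequality on $P\cH$
\[
P(\rho-\sigma_\varepsilon)P\ \ge\ -\normone{\rho-\sigma_\varepsilon}\,P\ \ge\ -\delta_0P .
\]
For the main term, $P\sigma_\varepsilon P\ge\lambda_\star P$ holds by definition of $\lambda_\star$. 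Adding the two bounds yields
\[
P\rho P\ge(\lambda_\star-\delta_0)P\ge\frac{\lambda_\star}{2}P=a_0P,
\]
which is exactly Assumption~\ref{ass:local}.

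There is no real obstacle here. The only points needing care are bookkeeping. All operator inequalities must be read on $P\cH$, that is, as inequalities between operators of the form $PXP$. The bound $P(\rho-\sigma_\varepsilon)P\ge-\|\rho-\sigma_\varepsilon\|_\infty P$ is the elementary fact $-\|Y\|_\infty\Id\le Y\le\|Y\|_\infty\Id$ for self-adjoint $Y$, conjugated by $P$. If desired, one can also record the explicit lower bound $\lambda_\star\ge(1-\varepsilon)\lambda_{\min}(\sigma|_{P\cH})+\varepsilon/d$. This lower bound is not needed for the statement, and I would mention it only as a remark.
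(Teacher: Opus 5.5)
Your proof is correct and is essentially the paper's argument: you write $P\rho P=P\sigma_\varepsilon P+P(\rho-\sigma_\varepsilon)P$ and bound the perturbation via $\|PXP\|_{\mathrm{op}}\le\|X\|_{\mathrm{op}}\le\normone{X}\le\delta_0$. The paper states this through quadratic forms $\langle v|\cdot|v\rangle$ on unit vectors $v\in P\cH$, while you state it as an operator inequality conjugated by $P$. You also make explicit that $\lambda_{\min}$ must be taken on $P\cH$, where it is strictly positive; the paper leaves this point implicit, and spelling it out is a useful clarification.
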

\begin{proof}
For any unit $v\in P\cH$:
\[
|\langle v|P(\rho-\sigma_\varepsilon)P|v\rangle|
\le \|P(\rho-\sigma_\varepsilon)P\|_{\rm op}
\le \|\rho-\sigma_\varepsilon\|_{\rm op}
\le \|\rho-\sigma_\varepsilon\|_1 \le \delta_0,
\]
using $\|A\|_{\rm op}\le\|A\|_1$ for any matrix $A$.
Since $\langle v|P\sigma_\varepsilon P|v\rangle\ge\lambda_{\min}(P\sigma_\varepsilon P)$,
we obtain $\langle v|P\rho P|v\rangle\ge\lambda_{\min}(P\sigma_\varepsilon P)-\delta_0\ge a_0$.
\end{proof}

\begin{lemma}[Monotonicity of $L$ at fixed trace]\label{lem:L-monotone}
Fix $p>0$. The function $d\mapsto L((p+d)/2,(p-d)/2)$ is increasing on $[0,p)$,
where $L(x,y):=(\log x-\log y)/(x-y)$ for $x\neq y$ and $L(x,x)=1/x$.
\end{lemma}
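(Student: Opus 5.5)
Write \(x=(p+d)/2\), \(y=(p-d)/2\) with \(d\in(0,p)\), so that \(x>y>0\) and \(x+y=p\). The plan is to reduce the statement to a one-variable inequality. Setting \(u:=d/p\in(0,1)\), we have \(x=p(1+u)/2\), \(y=p(1-u)/2\), and hence
\[
g(d):=L\!\left(\tfrac{p+d}{2},\tfrac{p-d}{2}\right)=\frac{1}{p}\cdot\frac{\log\frac{1+u}{1-u}}{u}=\frac{2}{p}\cdot\frac{\operatorname{artanh}(u)}{u}
\quad (u>0),
\qquad g(0)=\frac{2}{p}.
\]
The value at \(d=0\) comes from the convention \(L(x,x)=1/x\) with \(x=p/2\). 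Since \(\operatorname{artanh}(u)/u\to1\) as \(u\to0\), \(g\) is continuous at \(0\). It therefore suffices to show that \(h(u):=\operatorname{artanh}(u)/u\) is increasing on \((0,1)\) and satisfies \(h(u)\ge 1=h(0^+)\).

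The cleanest route is the power series. For \(|u|<1\),
\[
\operatorname{artanh}(u)=\sum_{n\ge0}\frac{u^{2n+1}}{2n+1},
\qquad\text{so}\qquad
h(u)=\sum_{n\ge0}\frac{u^{2n}}{2n+1}.
\]
This series has nonnegative coefficients and only even powers, so it is strictly increasing in \(u\in[0,1)\), and its value at \(u=0\) is \(1\). Hence \(g\) is increasing on \([0,p)\), with the endpoint \(d=0\) matching the convention.

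As an alternative, or as a check, we can use the integral representation \(L(x,y)=\int_0^1\frac{ds}{sx+(1-s)y}\). Substituting \(x,y\) gives
\[
g(d)=\int_0^1\frac{ds}{p/2+(2s-1)d/2}.
\]
Pairing \(s\) with \(1-s\) produces the integrand \(\frac{p}{(p/2)^2-(2s-1)^2d^2/4}\) over \(s\in[0,1/2]\). This is pointwise increasing in \(d\), which again gives monotonicity. There is no genuine obstacle in either argument. The only point needing care is the endpoint \(d=0\), where we must confirm that the convention \(L(x,x)=1/x\) agrees with the continuous extension. That agreement holds because \(h(0^+)=1\), equivalently because \(L\) is the integral above, which is continuous in \((x,y)\) on the positive quadrant.
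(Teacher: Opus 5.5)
Your proof is correct and uses the same reduction as the paper, namely $L\bigl(\tfrac{p+d}{2},\tfrac{p-d}{2}\bigr)=\tfrac{2}{p}\,\mathrm{artanh}(u)/u$ with $u=d/p$. The only difference is the last step: the paper shows $\mathrm{artanh}(z)/z$ is increasing by checking that $z/(1-z^2)-\mathrm{artanh}(z)$ vanishes at $0$ and has derivative $2z^2/(1-z^2)^2\ge 0$, while you read monotonicity directly off the series $\sum_{n\ge0}u^{2n}/(2n+1)$ and also check explicitly that the convention $L(x,x)=1/x$ matches the limit at $d=0$.
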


\begin{proof}
Setting $z=d/p$, one has $L((p+d)/2,(p-d)/2) = (2/p)\,\mathrm{artanh}(z)/z$.
It suffices to show $z\mapsto \mathrm{artanh}(z)/z$ is increasing on $[0,1)$.
Writing $h(z):= z/(1-z^2)-\mathrm{artanh}(z)$, one has $h(0)=0$ and
$h'(z)=2z^2/(1-z^2)^2\ge 0$, so $h\ge 0$ and hence the derivative of
$\mathrm{artanh}(z)/z$ is non-negative.
\end{proof}

\begin{lemma}[Second-order representation]\label{lem:BKM-integral}
Let \(D_0>0\) and let \(Y=Y^*\) be such that \(\Tr Y=0\), \(D_0+tY>0\) for
\(t\in[0,1)\), and \(D_0+Y\ge0\). Then
\[
D(D_0+Y\|D_0)
=
\int_0^1 (1-t)\,H_{D_0+tY}(Y,Y)\,dt,
\]
with the right-hand side understood as an improper integral.
Here \(H_M\) denotes the Hessian of \(X\mapsto \Tr(X\log X)\) at \(M\).
\end{lemma}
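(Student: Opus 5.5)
Our approach is Taylor's formula with integral remainder, applied to the scalar function
\[
f(t):=D(D_0+tY\|D_0)=\Tr\big((D_0+tY)\log(D_0+tY)\big)-\Tr\big((D_0+tY)\log D_0\big),\qquad t\in[0,1].
\]
Write \(\phi(X)=\Tr(X\log X)\). Then \(f(t)=\phi(D_0+tY)-\Tr((D_0+tY)\log D_0)\). The second term is affine in \(t\), so for \(t\in[0,1)\) we get \(f''(t)=H_{D_0+tY}(Y,Y)\), where the Hessian is the second directional derivative of \(\phi\) along \(Y\). This holds because \(D_0+tY>0\) there, and on the open cone of positive definite matrices \(\phi\) is real-analytic.

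The boundary values are as follows. First, \(f(0)=0\). Second, \(f'(0)=\Tr(Y\log D_0)+\Tr Y-\Tr(Y\log D_0)=\Tr Y=0\). This uses the standard derivative \(\frac{d}{dt}\phi(X+tY)=\Tr(Y\log X)+\Tr Y\), and it is the only place where the trace-zero hypothesis enters.

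Next we fix \(s<1\) and apply Taylor's theorem with integral remainder on \([0,s]\), where everything is smooth:
\[
f(s)=f(0)+sf'(0)+\int_0^s(s-t)f''(t)\,dt=\int_0^s (s-t)\,H_{D_0+tY}(Y,Y)\,dt.
\]
We then let \(s\uparrow1\). On the left, \(f(s)\to f(1)=D(D_0+Y\|D_0)\). This follows because \(x\log x\) is continuous on \([0,\infty)\) (with \(0\log0=0\)), so \(\phi\) is continuous on positive semidefinite matrices. The term \(\Tr((D_0+sY)\log D_0)\) is linear in \(s\), since \(D_0>0\) makes \(\log D_0\) finite. Hence \(f\) is continuous on \([0,1]\) even though \(D_0+Y\) may be singular.

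The right-hand side is the main obstacle, because \(H_{D_0+tY}(Y,Y)\) can blow up as \(t\to1\) when \(D_0+Y\) has a kernel. We handle it with two observations. First, the integrand is nonnegative: \(\phi\) is convex on positive definite matrices (operator convexity of \(x\log x\), or directly from the integral formula \(H_M(Y,Y)=\int_0^\infty\Tr\big(Y(M+u)^{-1}Y(M+u)^{-1}\big)\,du\ge0\)). Second, \((s-t)\mathbf 1_{t\le s}\) increases to \((1-t)\) as \(s\uparrow1\).

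Monotone convergence then gives
\[
\int_0^s(s-t)H_{D_0+tY}(Y,Y)\,dt\;\longrightarrow\;\int_0^1(1-t)H_{D_0+tY}(Y,Y)\,dt,
\]
with the limit possibly \(+\infty\) a priori. It equals the finite number \(f(1)\), so the improper integral converges to \(D(D_0+Y\|D_0)\), as claimed.

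If one prefers to avoid invoking convexity abstractly, one can instead diagonalize \(D_0+tY\) and use the divided-difference form \(H_M(Y,Y)=\sum_{i,j}L(\lambda_i,\lambda_j)|Y_{ij}|^2\). Here \(L\) is as in Lemma~\ref{lem:L-monotone}, and \(L>0\) gives the nonnegativity directly.
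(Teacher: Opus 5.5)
Your proposal is correct and follows essentially the same route as the paper: Taylor's formula with integral remainder on a truncated interval $[0,s]$, with $\Tr Y=0$ killing the first-order term, followed by the limit $s\uparrow 1$ (the paper instead rescales to $f_r(t)=D(D_0+trY\|D_0)$ and substitutes $u=rt$, which is equivalent). Your monotone-convergence argument from the nonnegativity of $H_{D_0+tY}(Y,Y)$, together with your appeal to continuity of $x\log x$ at $0$ for the left-hand side, supplies justifications that the paper's proof only asserts.
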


\begin{proof}
For \(0<r<1\), set \(f_r(t):=D(D_0+trY\|D_0)\). Then
\(f_r(0)=0\). Moreover, differentiating at \(t=0\) gives
\[
f_r'(0)=r\,\Tr Y=0,
\]
Since \(f_r'(0)=0\), Taylor's formula with integral remainder gives
\[
f_r(1)
=
\int_0^1(1-t)H_{D_0+trY}(rY,rY)\,dt.
\]
Equivalently, after the change of variables \(u=rt\),
\[
D(D_0+rY\|D_0)
=
\int_0^r (r-u)H_{D_0+uY}(Y,Y)\,du.
\]
Letting \(r\uparrow1\), the left-hand side converges to
\(D(D_0+Y\|D_0)\), since \(D_0>0\). The right-hand side converges to the
improper integral
\[
\int_0^1 (1-u)H_{D_0+uY}(Y,Y)\,du.
\]
\end{proof}

\begin{lemma}[$2\times2$ BKM lower bound]\label{lem:2x2-BKM}
Let $D_0=\mathrm{diag}(a,c)$, $Y=\bigl(\begin{smallmatrix}0&s\\s&0\end{smallmatrix}\bigr)$,
$M_t=D_0+tY$, with $a>0$, $c>0$, and $s^2\le ac$. Then
\[
\int_0^1 (1-t)\,H_{M_t}(Y,Y)\,dt \ge s^2\,L(a,c).
\]
\end{lemma}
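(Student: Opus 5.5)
The plan is to prove the pointwise bound $H_{M_t}(Y,Y)\ge 2s^2L(a,c)$ for every $t\in[0,1)$ and then integrate. Since $\int_0^1(1-t)\,dt=\tfrac12$ and the integrand is nonnegative, this pointwise bound gives the claim, also when the integral is improper at $t=1$.

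Positivity comes first. We have $\Tr M_t=a+c=:p$ and $\det M_t=ac-t^2s^2$. For $t<1$ this is $\ge ac(1-t^2)>0$, using $s^2\le ac$, so $M_t>0$. Its eigenvalues are
\[
\lambda_\pm(t)=\tfrac12\bigl(p\pm d_t\bigr),\qquad d_t=\sqrt{(a-c)^2+4t^2s^2}\in[\,|a-c|,p).
\]

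Next, write the Hessian of $X\mapsto\Tr(X\log X)$ in Daleckii--Krein form, in an eigenbasis $\{u_\pm\}$ of $M_t$:
\[
H_{M_t}(Y,Y)=\sum_{i,j\in\{\pm\}}|Y_{ij}|^2\,L(\lambda_i,\lambda_j).
\]
Here $Y_{ij}=\langle u_i|Y|u_j\rangle$, and $L$ is the divided difference of $\log$ from Lemma~\ref{lem:L-monotone}, with $L(x,x)=1/x$. This is the standard second derivative of $\Tr X\log X$, i.e.\ $\Tr\, Y\int_0^\infty (M+u)^{-1}Y(M+u)^{-1}\,du$.

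The diagonal entries satisfy $Y_{++}=-Y_{--}=:y$ because $\Tr Y=0$. The total Hilbert--Schmidt weight is $2y^2+2|Y_{+-}|^2=\|Y\|_2^2=2s^2$. The diagonal part contributes $y^2(1/\lambda_++1/\lambda_-)$. Since $1/L(\lambda_+,\lambda_-)$ is the logarithmic mean, which dominates the harmonic mean, we get $\tfrac12(1/\lambda_++1/\lambda_-)\ge L(\lambda_+,\lambda_-)$. Hence
\[
H_{M_t}(Y,Y)\ge\bigl(2y^2+2|Y_{+-}|^2\bigr)L(\lambda_+,\lambda_-)=2s^2\,L(\lambda_+,\lambda_-).
\]

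Finally, apply Lemma~\ref{lem:L-monotone} at fixed trace $p$. The map $d\mapsto L((p+d)/2,(p-d)/2)$ is increasing on $[0,p)$, and $d_t\ge|a-c|$. Therefore
\[
L(\lambda_+,\lambda_-)\ge L\bigl(\tfrac{p+|a-c|}2,\tfrac{p-|a-c|}2\bigr)=L(a,c),
\]
using the symmetry of $L$. Integrating against $(1-t)$ then finishes the proof.

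The main obstacle is the diagonal term: $Y$ is off-diagonal in the original basis but not in the eigenbasis of $M_t$. The log-mean versus harmonic-mean comparison is what controls this term, so that the full $\|Y\|_2^2$ is weighted by at least $L(\lambda_+,\lambda_-)$.
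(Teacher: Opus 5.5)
Your proof is correct and matches the paper's argument step by step: the Daleckii--Krein representation of $H_{M_t}$, the harmonic-versus-logarithmic mean inequality to absorb the eigenbasis-diagonal part of $Y$ into $2s^2L(\lambda_+,\lambda_-)$, and Lemma~\ref{lem:L-monotone} at fixed trace. The differences are cosmetic: you express the weights through $\Tr Y=0$ and $\|Y\|_2^2=2s^2$ rather than the explicit angles $\sin^2(2\theta_t),\cos^2(2\theta_t)$, and you also check $M_t>0$ for $t<1$ and treat the improper integral explicitly, which the paper leaves implicit.
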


\begin{proof}
The eigenvalues of $M_t$ are $\lambda_\pm(t)=(a+c\pm\Delta(t))/2$
with $\Delta(t)=\sqrt{(a-c)^2+4t^2s^2}$.
The spectral representation of \(H_{M_t}\) gives
\[
H_{M_t}(Y,Y)
=s^2\sin^2(2\theta_t)\!\left(\frac{1}{\lambda_+(t)}+\frac{1}{\lambda_-(t)}\right)
+ 2s^2\cos^2(2\theta_t)\,L(\lambda_+(t),\lambda_-(t)),
\]
where $\cos(2\theta_t)=(a-c)/\Delta(t)$ and $\sin(2\theta_t)=2ts/\Delta(t)$.
The harmonic mean is bounded above by the logarithmic mean:
\[
\frac{2}{1/\lambda_+(t)+1/\lambda_-(t)}
\le
\frac{\lambda_+(t)-\lambda_-(t)}
{\log\lambda_+(t)-\log\lambda_-(t)}.
\]
Equivalently,
\[
\frac{1}{\lambda_+(t)}+\frac{1}{\lambda_-(t)}
\ge
2L(\lambda_+(t),\lambda_-(t)).
\]
Hence $H_{M_t}(Y,Y)\ge 2s^2L(\lambda_+(t),\lambda_-(t))$.
Since $\Delta(t)\ge|a-c|$, Lemma~\ref{lem:L-monotone} gives
$L(\lambda_+(t),\lambda_-(t))\ge L(a,c)$.
Integrating against \((1-t)\) gives
\[
\int_0^1(1-t)H_{M_t}(Y,Y)\,dt
\ge
2s^2L(a,c)\int_0^1(1-t)\,dt
=
s^2L(a,c).
\]
\end{proof}
\begin{corollary}[$2\times2$ entropy lower bound]
\label{cor:2x2-entropy}
Under the hypotheses of Lemma~\ref{lem:2x2-BKM},
\[
D(D_0+Y\|D_0)
\ge
s^2L(a,c).
\]
\end{corollary}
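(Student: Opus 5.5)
The plan is to combine Lemma~\ref{lem:BKM-integral} with Lemma~\ref{lem:2x2-BKM}. The only real work is checking that the hypotheses of the integral representation hold for \(D_0=\diag(a,c)\) and \(Y=\bigl(\begin{smallmatrix}0&s\\s&0\end{smallmatrix}\bigr)\) when \(s^2\le ac\).

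First I verify the hypotheses of Lemma~\ref{lem:BKM-integral}.
\begin{itemize}
\item Since \(a>0\) and \(c>0\), we have \(D_0>0\).
\item \(Y\) is self-adjoint, and \(\Tr Y=0\) because its diagonal is zero.
\item For \(t\in[0,1]\), \(M_t=D_0+tY\) has positive diagonal entries and determinant \(ac-t^2s^2\). For \(t\in[0,1)\) this is at least \(ac(1-t^2)>0\), using \(s^2\le ac\). Hence \(M_t>0\) on \([0,1)\).
\item At \(t=1\) the determinant is \(ac-s^2\ge0\), and the trace is positive, so \(D_0+Y\ge0\).
\end{itemize}
The edge case \(s=0\) is trivial: both sides vanish.

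Lemma~\ref{lem:BKM-integral} then gives
\[
D(D_0+Y\|D_0)=\int_0^1(1-t)H_{M_t}(Y,Y)\,dt,
\]
understood as an improper integral. Lemma~\ref{lem:2x2-BKM} bounds the right-hand side below by \(s^2L(a,c)\), which is the claim.

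The one delicate point is the improper integral when \(s^2=ac\). In that case \(\lambda_-(t)\to0\) as \(t\uparrow1\), and the integrand may blow up. This causes no trouble, for two reasons.
\begin{itemize}
\item The integrand is nonnegative, because the Hessian of \(X\mapsto\Tr(X\log X)\) is positive on \(M_t>0\). So the improper integral is a monotone limit, possibly \(+\infty\).
\item The pointwise bound \(H_{M_t}(Y,Y)\ge 2s^2L(a,c)\) established in Lemma~\ref{lem:2x2-BKM} holds for every \(t\in[0,1)\).
\end{itemize}
Applying the pointwise bound on \([0,r]\) and letting \(r\uparrow1\) gives the inequality in the limit. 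This is consistent with how the limit is taken in Lemma~\ref{lem:BKM-integral}, where the left-hand side is finite and continuous as \(r\uparrow1\) because \(D_0>0\).
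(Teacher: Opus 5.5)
Your proposal is correct and follows the same route as the paper, whose proof simply combines Lemma~\ref{lem:BKM-integral} with Lemma~\ref{lem:2x2-BKM}. What you add are details the paper leaves implicit: the explicit hypothesis check ($\det M_t\ge ac(1-t^2)>0$ on $[0,1)$, and $D_0+Y\ge 0$), and the monotone-convergence treatment of the improper integral in the boundary case $s^2=ac$.
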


\begin{proof}
This follows from Lemmas~\ref{lem:BKM-integral}
and~\ref{lem:2x2-BKM}.
\end{proof}
\begin{lemma}[BKM coherence coercivity]\label{lem:BKM}
Let $\rho$ satisfy Assumption~\ref{ass:local}, so $A=P\rho P\ge a_0 P$,
and assume $\varepsilon_Q(\rho)\le a_0/2$.
Then
\begin{equation}
D(\rho\|\sigma_\varepsilon)\ge
c(\rho)\log\!\Bigl(\frac{a_0}{\varepsilon_Q(\rho)}\Bigr).
\label{eq:BKM-eps}
\end{equation}
In particular, since $\varepsilon_Q(\rho)\le\cA(\rho)^2$,
\begin{equation}
D(\rho\|\sigma_\varepsilon)\ge
2\,c(\rho)\log\!\Bigl(\frac{\sqrt{a_0}}{\cA(\rho)}\Bigr).
\label{eq:BKM-main}
\end{equation}
\end{lemma}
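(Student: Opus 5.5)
The plan is to strip away the block-diagonal reference, then reduce the coherence entropy to a sum of effective $2\times2$ problems, and finally apply Corollary~\ref{cor:2x2-entropy} to each of them.

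\textbf{Step 1: reduction to $D(\rho\|\Pi\rho)$.} Since $\sigma_\varepsilon$ is faithful and block-diagonal relative to $P\oplus Q$, $\log\sigma_\varepsilon$ commutes with $P$ and $Q$. Therefore $\Tr(\rho\log\sigma_\varepsilon)=\Tr(\Pi\rho\,\log\sigma_\varepsilon)$. Using $\Tr(\rho\log\Pi\rho)=\Tr(\Pi\rho\log\Pi\rho)$, this gives the Pythagorean identity
\[
D(\rho\|\sigma_\varepsilon)=D(\rho\|\Pi\rho)+D(\Pi\rho\|\sigma_\varepsilon)\ge D(\rho\|\Pi\rho).
\]
It therefore suffices to bound $D(\rho\|\Pi\rho)$ from below.

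\textbf{Step 2: singular value decomposition and a second pinching.} Write $B=P\rho Q=\sum_j s_j|u_j\rangle\langle v_j|$ with $s_j>0$. Here $\{u_j\}\subset P\cH$ and $\{v_j\}\subset Q\cH$ are orthonormal families, and $\sum_j s_j^2=c(\rho)$. The two-dimensional subspaces $\mathrm{span}\{u_j,v_j\}$ are mutually orthogonal. Let $R_j$ be the projector onto $\mathrm{span}\{u_j,v_j\}$ and $R_0=\Id-\sum_jR_j$. The pinching $\Phi(X)=\sum_{j\ge0}R_jXR_j$ is a channel, so data processing gives $D(\rho\|\Pi\rho)\ge D(\Phi\rho\|\Phi\Pi\rho)$. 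The right-hand side splits as a sum over blocks. On block $j\ge1$ we have, in the basis $(u_j,v_j)$,
\[
R_j\rho R_j=\begin{pmatrix}a_j&s_j\\ s_j&c_j\end{pmatrix},
\qquad
R_j\Pi\rho R_j=\diag(a_j,c_j),
\]
where $a_j=\langle u_j|A|u_j\rangle\ge a_0$ and $c_j=\langle v_j|C|v_j\rangle$. The block $j=0$ contributes a nonnegative term. Positivity of $\rho$ gives $s_j^2\le a_jc_j$, so in particular $c_j>0$ whenever $s_j>0$. Hence $D_0=\diag(a_j,c_j)>0$. Moreover $\det(D_0+tY)=a_jc_j-t^2s_j^2>0$ for $t<1$, so the hypotheses of Lemma~\ref{lem:BKM-integral} hold. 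Corollary~\ref{cor:2x2-entropy} then yields
\[
D(\rho\|\Pi\rho)\ge\sum_j s_j^2\,L(a_j,c_j).
\]

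\textbf{Step 3: bounding the logarithmic mean.} We have $c_j\le\Tr C=\varepsilon_Q(\rho)\le a_0/2<a_0\le a_j\le 1$. Hence
\[
L(a_j,c_j)=\frac{\log(a_j/c_j)}{a_j-c_j}\ge\frac{\log(a_j/c_j)}{a_j}\ge\log\!\Bigl(\frac{a_0}{\varepsilon_Q(\rho)}\Bigr),
\]
where the last inequality uses $a_j\le1$ and the monotonicity of $\log$. Summing over $j$ gives \eqref{eq:BKM-eps}. For \eqref{eq:BKM-main}, note that $\varepsilon_Q\le\cA^2$ implies $\log(a_0/\varepsilon_Q)\ge\log(a_0/\cA^2)=2\log(\sqrt{a_0}/\cA)$.

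\textbf{Main obstacle.} I expect the main point needing care to be Step 2. One must check that the second pinching is compatible with $\Pi$, so that $\Phi\Pi\rho$ is exactly the diagonal part of each block, and that the off-diagonal entries in each block are exactly $s_j$. Both follow from the singular-vector relations $\langle u_i|B|v_j\rangle=s_j\delta_{ij}$ and the fact that $\Pi$ kills all $P$–$Q$ cross terms. The remaining steps are routine.
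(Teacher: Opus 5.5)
Your proposal is correct and follows the paper's proof essentially step for step: the Pythagorean reduction to $D(\rho\|\Pi\rho)$, data processing under the SVD-adapted pinching onto the orthogonal blocks $\Span\{u_j,v_j\}$, Corollary~\ref{cor:2x2-entropy} on each $2\times2$ block, and the bound $L(a_j,c_j)\ge\log(a_j/c_j)\ge\log(a_0/\varepsilon_Q(\rho))$, which uses $a_j-c_j\le 1$. The differences are cosmetic: keeping only $s_j>0$ lets you skip the paper's separate $c_j=0$ case, and you explicitly check the positivity hypotheses of Lemma~\ref{lem:BKM-integral}, which the paper leaves implicit.
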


\begin{proof}
\textbf{Step 1 (Pinching decomposition).}
Let $\Pi\rho:=P\rho P+Q\rho Q$.
Since $\log\Pi\rho$ and $\log\sigma_\varepsilon$ are block-diagonal,
$\Tr(\rho\log\Pi\rho)=\Tr(\Pi\rho\log\Pi\rho)$
and $\Tr(\rho\log\sigma_\varepsilon)=\Tr(\Pi\rho\log\sigma_\varepsilon)$
(both use $\Tr(\rho X)=\Tr(\Pi\rho\cdot X)$ for block-diagonal $X$,
since the off-diagonal part of $\rho$ has zero trace against any block-diagonal matrix).
Hence:
\begin{equation}
D(\rho\|\sigma_\varepsilon)
=D(\rho\|\Pi\rho)+D(\Pi\rho\|\sigma_\varepsilon)
\ge D(\rho\|\Pi\rho).
\label{eq:chain-rule-pinching}
\end{equation}

\textbf{Step 2 (SVD reduction).}
The argument proceeds through SVD-adapted pinching and blockwise BKM coercivity rather than through a global constrained minimization theorem.
Write
\[
B=P\rho Q=\sum_j s_j |u_j\rangle\langle v_j|
\]
(SVD, with $s_j\ge 0$, $u_j\in P\cH$, $v_j\in Q\cH$, and both families orthonormal).
For each $j$, set
\[
K_j:=\Span\{u_j,v_j\}, \qquad
P_j:=|u_j\rangle\langle u_j|+|v_j\rangle\langle v_j|,
\]
and let
\[
K^\perp := \Big(\bigoplus_j K_j\Big)^\perp, \qquad
P_\perp := I-\sum_j P_j.
\]
Define the pinching channel
\[
\Phi(X):=\bigoplus_j P_j X P_j \oplus P_\perp X P_\perp .
\]
Since $\Phi$ is completely positive and trace preserving, monotonicity of relative entropy gives
\begin{equation}
D(\rho\|\Pi\rho)\ge D(\Phi(\rho)\|\Phi(\Pi\rho)).
\label{eq:data-processing-step2}
\end{equation}

Now $\Phi(\rho)$ and $\Phi(\Pi\rho)$ are block-diagonal with respect to the orthogonal decomposition
\[
\cH=\Big(\bigoplus_j K_j\Big)\oplus K^\perp.
\]
Hence relative entropy is additive across this direct sum, so
\begin{equation}
D(\Phi(\rho)\|\Phi(\Pi\rho))
=
\sum_j D(P_j\rho P_j \| P_j\Pi\rho P_j)
+
D(P_\perp\rho P_\perp \| P_\perp\Pi\rho P_\perp).
\label{eq:additivity-step2}
\end{equation}

We claim that the $K^\perp$-block contributes zero. Indeed, the only difference between
$\rho$ and $\Pi\rho=P\rho P+Q\rho Q$ is the off-diagonal $P$--$Q$ block $B+B^*$, and by
construction all singular directions of $B$ are contained in $\bigoplus_j K_j$. Therefore
\[
P_\perp \rho P_\perp = P_\perp \Pi\rho P_\perp,
\]
and so
\[
D(P_\perp\rho P_\perp \| P_\perp\Pi\rho P_\perp)=0.
\]
Thus \eqref{eq:additivity-step2} reduces to
\begin{equation}
D(\Phi(\rho)\|\Phi(\Pi\rho))
=
\sum_j D(P_j\rho P_j \| P_j\Pi\rho P_j).
\label{eq:sum-blocks-step2}
\end{equation}

For each $j$, write
\[
M_j:=P_j\rho P_j
=
\begin{pmatrix}
a_j & s_j\\
s_j & c_j
\end{pmatrix},
\qquad
D_j:=P_j\Pi\rho P_j=\diag(a_j,c_j),
\]
where
\[
a_j:=\langle u_j|A|u_j\rangle, \qquad
c_j:=\langle v_j|C|v_j\rangle.
\]
Since $M_j$ is the compression of the positive semidefinite operator $\Phi(\rho)$ to the
subspace $K_j$, it is itself positive semidefinite. Therefore
\[
s_j^2 \le a_j c_j,
\]
so the hypothesis of Lemma~\ref{lem:2x2-BKM} is satisfied for each block.

Combining \eqref{eq:data-processing-step2} and \eqref{eq:sum-blocks-step2}, we obtain
\begin{equation}
D(\rho\|\Pi\rho)\ge \sum_j D(M_j\|D_j).
\label{eq:sum-2x2}
\end{equation}
\textbf{Step 3 (Per-block bound).}
Fix $j$.

\smallskip

\emph{Case $c_j = 0$.}
Since
\[
M_j =
\begin{pmatrix}
a_j & s_j \\
s_j & 0
\end{pmatrix}
\ge 0,
\]
positivity of a $2\times2$ matrix implies $s_j = 0$. Hence
\[
D(M_j\|D_j)=0,
\]
and this block contributes nothing.

\smallskip

\emph{Case $c_j > 0$.}
By Corollary~\ref{cor:2x2-entropy},
\[
D(M_j\|D_j)
\ge
s_j^2\,L(a_j,c_j),
\qquad
L(a,c):=\frac{\log(a/c)}{a-c}.
\]

We now derive a uniform lower bound on $L(a_j,c_j)$.

\smallskip

\emph{Bounds on $a_j$ and $c_j$.}
Since $\rho$ is a density matrix, $\|\rho\|_{\mathrm{op}}\le 1$. Hence
\[
0 \le a_j = \langle u_j|P\rho P|u_j\rangle \le \|P\rho P\|_{\mathrm{op}} \le 1,
\]
and similarly $0 \le c_j = \langle v_j|Q\rho Q|v_j\rangle \le 1$.
Moreover, by Assumption~\ref{ass:local},
\[
a_j \ge a_0 > 0,
\]
and since $c_j$ is a diagonal entry of $Q\rho Q$,
\[
c_j \le \Tr(Q\rho Q) = \varepsilon_Q(\rho).
\]
By hypothesis, $\varepsilon_Q(\rho)\le a_0/2$, hence
\[
0 < c_j \le \frac{a_0}{2} < a_j.
\]

\smallskip

\emph{Lower bound on $L(a_j,c_j)$.}
Since $0 < c_j < a_j \le 1$, we have
\[
0 < a_j - c_j \le a_j \le 1.
\]
Therefore,
\[
L(a_j,c_j)
=
\frac{\log(a_j/c_j)}{a_j-c_j}
\ge
\frac{\log(a_j/c_j)}{1}
=
\log(a_j/c_j).
\]
Using $a_j \ge a_0$ and $c_j \le \varepsilon_Q(\rho)$ gives
\[
\log(a_j/c_j)
\ge
\log\!\Bigl(\frac{a_0}{\varepsilon_Q(\rho)}\Bigr).
\]
Combining the above,
\[
L(a_j,c_j)
\ge
\log\!\Bigl(\frac{a_0}{\varepsilon_Q(\rho)}\Bigr).
\]

\smallskip

\emph{Conclusion for each block.}
Thus for all $j$,
\[
D(M_j\|D_j)
\ge
s_j^2\,\log\!\Bigl(\frac{a_0}{\varepsilon_Q(\rho)}\Bigr).
\]

\smallskip

\emph{Summation.}
Summing over $j$ and using
\[
\sum_j s_j^2 = \|B\|_2^2 = c(\rho),
\]
we obtain
\begin{equation}
D(\rho\|\Pi\rho)
\ge
c(\rho)\,\log\!\Bigl(\frac{a_0}{\varepsilon_Q(\rho)}\Bigr).
\label{eq:BKM-Picoh}
\end{equation}

This is a lower bound on the coherence entropy $D(\rho\|\Pi\rho)$ alone,
independent of the outer reference $\sigma_\varepsilon$.
Applying Step~1 then yields~\eqref{eq:BKM-eps}, and
\eqref{eq:BKM-main} follows since $\varepsilon_Q(\rho)\le \cA(\rho)^2$.
\end{proof}

\section{Entropy-to-Activation Bounds}\label{sec:entropy-activation}

We study the extent to which entropy can control the activation quantity
\[
\cA(\rho):=\sqrt{c(\rho)+\varepsilon_Q(\rho)}, \qquad
c(\rho):=\|P\rho Q\|_2^2, \qquad
\varepsilon_Q(\rho):=\Tr(Q\rho Q),
\]
under a fixed decomposition $\cH = P\oplus Q$.

\subsection{A no-go result for coherence-entropy control}

\begin{proposition}[No activation bound from coherence entropy alone]
\label{prop:nogo}
There is no function $F:[0,\infty)\to[0,\infty)$ with $F(s)\to0$ as $s\downarrow0$
such that
\[
\cA(\rho)\le F\!\bigl(D(\rho\|\Pi\rho)\bigr)
\quad\text{for all states }\rho.
\]
\end{proposition}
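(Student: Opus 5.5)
The plan is to argue by contradiction using block-diagonal states, as anticipated in the introduction. Suppose such an $F$ exists. Take any state that is block-diagonal with respect to $P\oplus Q$ but has nonzero $Q$-population. Concretely, fix a unit vector $u\in P\cH$ (possible since $r\ge1$) and a unit vector $v\in Q\cH$ (possible since $r<d$). For $\lambda\in(0,1]$, set
\[
\rho_\lambda:=(1-\lambda)|u\rangle\langle u|+\lambda|v\rangle\langle v|.
\]
Then $P\rho_\lambda Q=0$ and $\Pi\rho_\lambda=\rho_\lambda$. Hence $D(\rho_\lambda\|\Pi\rho_\lambda)=0$, while
\[
\cA(\rho_\lambda)=\sqrt{0+\lambda}=\sqrt{\lambda}>0.
\]

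The first step is to pin down $F(0)$. Since $F(s)\to0$ as $s\downarrow0$, one cannot immediately conclude that $F(0)=0$, because the limit hypothesis says nothing about the value at $0$ itself. So the contradiction must use states with small but positive coherence entropy. For this, perturb $\rho_\lambda$ by a small coherence term. Set $\rho_{\lambda,t}:=\rho_\lambda+t(|u\rangle\langle v|+|v\rangle\langle u|)$ with $t^2\le\lambda(1-\lambda)$, so that the $2\times2$ block is positive semidefinite. The key check is that $D(\rho_{\lambda,t}\|\Pi\rho_{\lambda,t})=D(\rho_{\lambda,t}\|\rho_\lambda)\to0$ as $t\to0$ with $\lambda$ fixed. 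This holds by continuity of relative entropy in its first argument when the second argument has full support on the relevant two-dimensional block. Alternatively, use Lemma~\ref{lem:BKM-integral} together with the explicit bounded Hessian for $2\times2$ matrices with $\lambda\in(0,1)$ fixed. For $t\ne0$ the entropy is strictly positive, since $\rho_{\lambda,t}\ne\Pi\rho_{\lambda,t}$.

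Meanwhile $\cA(\rho_{\lambda,t})^2=t^2+\lambda\ge\lambda$. Choosing $t_n\to0$ gives $s_n:=D(\rho_{\lambda,t_n}\|\Pi\rho_{\lambda,t_n})\downarrow0$ with $s_n>0$, so $F(s_n)\to0$. On the other hand, the assumed inequality forces $F(s_n)\ge\sqrt{\lambda}$, which is a contradiction. If one instead interprets the hypothesis as including $F(0)=0$ by continuity, the unperturbed $\rho_\lambda$ already gives the contradiction $\sqrt{\lambda}\le F(0)=0$. I will present both cases, with the perturbative argument as the main one.

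The only mildly delicate step is the continuity claim $D(\rho_{\lambda,t}\|\rho_\lambda)\to0$. I will handle it directly by reducing to the $2\times2$ block on $\Span\{u,v\}$, where everything is supported, and using that $\rho_\lambda$ restricted there is $\diag(1-\lambda,\lambda)>0$. Relative entropy against a fixed faithful reference is continuous in the first argument, and it vanishes at $t=0$.
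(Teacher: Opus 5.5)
Your proof is correct. Its core idea, that block-diagonal states have zero coherence entropy but positive activation, is the one the paper uses, but your extra perturbative step is needed for the statement as written. The paper stops at exactly block-diagonal states and concludes with ``$F(0)=0$ would imply $\cA(\rho)=0$.'' That tacitly assumes $F(0)=0$, which the stated hypothesis ($F(s)\to0$ as $s\downarrow0$) does not give. For instance, $F(0)=1$ accommodates every block-diagonal state, since there $\cA(\rho)^2=\varepsilon_Q(\rho)\le1$. Your family $\rho_{\lambda,t}$ has $D(\rho_{\lambda,t}\|\Pi\rho_{\lambda,t})>0$ tending to $0$ while $\cA(\rho_{\lambda,t})\ge\sqrt{\lambda}$, and this proves the proposition as literally stated. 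The paper's version is shorter and is complete only under the unstated reading that $F(0)=0$ (or $F$ is continuous at $0$ with value $0$); your second case covers exactly that reading. Two small points. First, the perturbation needs $\lambda\in(0,1)$, not $(0,1]$, because at $\lambda=1$ positivity forces $t=0$; you do use $\lambda\in(0,1)$ later, so just state it up front. Second, the step you flag as delicate is immediate once you note that $\rho_{\lambda,t}$ and $\rho_\lambda$ have the same diagonal in the $\{u,v\}$ basis. Hence $D(\rho_{\lambda,t}\|\rho_\lambda)=S(\rho_\lambda)-S(\rho_{\lambda,t})$, with $S$ the von Neumann entropy (in general $D(\rho\|\Pi\rho)=S(\Pi\rho)-S(\rho)$). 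This is visibly continuous in $t$, vanishes at $t=0$, and is strictly positive for $t\neq0$, so neither the general continuity theorem nor the BKM integral representation is needed.
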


\begin{proof}
Let $\rho=\Pi\rho$ be block diagonal. Then
\[
D(\rho\|\Pi\rho)=0,\qquad c(\rho)=0,
\qquad \cA(\rho)=\sqrt{\varepsilon_Q(\rho)}.
\]
For any $\varepsilon>0$, choose $\rho$ with $\varepsilon_Q(\rho)=\varepsilon$.
Then $\cA(\rho)=\sqrt{\varepsilon}>0$ while $D(\rho\|\Pi\rho)=0$,
so $F(0)=0$ would imply $\cA(\rho)=0$, a contradiction.
\end{proof}

\begin{remark}
This obstruction applies specifically to the coherence entropy $D(\rho\|\Pi\rho)$.
Relative entropy with respect to a general reference state $\sigma$ may still
control $\varepsilon_Q$ if $\sigma$ penalizes the $Q$ sector.
\end{remark}

\subsection{Entropy-to-coherence mechanism}

Let $\sigma$ be a faithful state that is block diagonal with respect to $P\oplus Q$.
Assume $D(\rho\|\sigma)<\infty$. Then
\[
D(\rho\|\sigma)=D(\rho\|\Pi\rho)+D(\Pi\rho\|\sigma).
\]

Lemma~\ref{lem:BKM} gives the reference-independent bound~\eqref{eq:BKM-Picoh}. Therefore
\begin{equation}
\label{eq:activation-BKM-coherence}
D(\rho\|\sigma)
\;\ge\;
c(\rho)\,\log\!\left(\frac{a_0}{\varepsilon_Q(\rho)}\right),
\end{equation}
for all $\rho$ such that $P\rho P\ge a_0P$ and $\varepsilon_Q(\rho)\le a_0/2$.

\medskip

\noindent
\textit{Convention.} If \(\varepsilon_Q(\rho)=0\), then positivity implies \(c(\rho)=0\), and
\eqref{eq:activation-BKM-coherence} is interpreted as \(0\ge0\).

\subsection{Conditional entropy--activation bounds}

\begin{lemma}[Asymptotic inversion of $x^2\log(C/x)$]
\label{lem:inverse}
Let $C>0$ and define $f(x)=x^2\log(C/x)$ on $(0,C/\sqrt e)$.
Then for any $K>0$, the inequality
\[
f(x)\le K e^{-2\alpha t}
\]
implies
\[
x \le C'\,\frac{e^{-\alpha t}}{\sqrt{1+\alpha t}},
\]
for a constant $C'>0$ depending only on $C$ and $K$. The function $f$ is strictly increasing on $(0,C/\sqrt e)$.
\end{lemma}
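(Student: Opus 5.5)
First I will check monotonicity. For $x\in(0,C/\sqrt e)$ we have $f'(x)=2x\log(C/x)-x=x\bigl(2\log(C/x)-1\bigr)$, and this is positive because $\log(C/x)>1/2$ there. So $f$ is strictly increasing on $(0,C/\sqrt e)$. This interval is the natural domain on which $f(x)\le y$ can be inverted into $x\le f^{-1}(y)$. On it $\log(C/x)\ge 1/2$, so $f(x)\ge x^2/2$; this crude bound will cover the regime of small $\alpha t$.

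The estimate itself splits into two regimes of $s:=\alpha t\ge0$. Set $y:=Ke^{-2s}$. When $s$ is bounded, say $s\le s_0$, the crude bound gives $x\le\sqrt{2K}\,e^{-s}\le \sqrt{2K}\sqrt{1+s_0}\,e^{-s}/\sqrt{1+s}$, which is the desired form. For large $s$ I take the candidate $x_*:=\lambda e^{-s}/\sqrt{1+s}$ with $\lambda$ to be chosen. I want to show $f(x_*)\ge y$ once $s$ is large, because monotonicity then gives $x\le x_*$. A direct computation gives
\[
f(x_*)=\frac{\lambda^2e^{-2s}}{1+s}\Bigl(\log(C/\lambda)+s+\tfrac12\log(1+s)\Bigr).
\]
If $\lambda\le C$, then $\log(C/\lambda)\ge0$, and so $f(x_*)\ge\lambda^2e^{-2s}\,s/(1+s)\ge \tfrac12\lambda^2e^{-2s}$ for $s\ge1$. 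Choosing $\lambda^2\ge 2K$ gives $f(x_*)\ge y$. I also need $x_*<C/\sqrt e$ for monotonicity to apply, which holds when $s$ is large or $\lambda$ is small. The two requirements $\lambda\le C$ and $\lambda\ge\sqrt{2K}$ may conflict.

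I expect this reconciliation of $\lambda\le C$ (needed so that $x_*$ lies in the domain and $\log(C/\lambda)\ge0$) with $\lambda^2\ge 2K$ to be the main obstacle. To handle it I will not insist that $\log(C/\lambda)\ge0$. Instead I will take $\lambda:=\max(\sqrt{2K},C)$, absorb the possibly negative term $\log(C/\lambda)$ into $s$ by choosing $s_0$ so large that $s+\log(C/\lambda)\ge s/2$ and $x_*<C/\sqrt e$ for all $s\ge s_0$, and then use $\lambda^2\ge 4K$ so that $\lambda^2 s/(2(1+s))\ge K$. With this $s_0$ (which depends only on $C$ and $K$), I set
\[
C':=\max\bigl(\lambda,\ \sqrt{2K(1+s_0)}\bigr).
\]
This constant covers both regimes. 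The comparison with $x_*$ in the large-$s$ regime uses only that $f$ is increasing on the interval and that $x$ lies in the domain.
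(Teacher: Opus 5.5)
Your proof is correct, but it reaches the refined bound by a different route than the paper.

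\textbf{The paper's route.} The paper never introduces a comparison point. It derives the crude bound $x\le\sqrt{2K}\,e^{-\alpha t}$ from $\log(C/x)\ge 1/2$ and feeds it back into the logarithm. This gives $\log(C/x)\ge \alpha t+\log(C/\sqrt{2K})\ge \alpha t/2$ for large $\alpha t$, hence $x^2\,\alpha t/2\le Ke^{-2\alpha t}$ directly. Bounded $\alpha t$ is then covered by enlarging the constant. This bootstrap never uses monotonicity of $f$, and it needs no check that some auxiliary point lies in $(0,C/\sqrt e)$.

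\textbf{Your route.} You build an explicit supersolution $x_*=\lambda e^{-s}/\sqrt{1+s}$ with $f(x_*)\ge Ke^{-2s}$ and invert through strict monotonicity on the branch. This is the more standard inversion argument, and it actually uses (and proves) the monotonicity clause of the lemma, which the paper only asserts. It also makes the dependence of $s_0$ and $C'$ on $C,K$ alone fully explicit. The cost is the extra domain check $x_*<C/\sqrt e$.

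\textbf{A small slip.} Your choice $\lambda:=\max(\sqrt{2K},C)$ does not guarantee $\lambda^2\ge 4K$. If $\sqrt{2K}\ge C$ then $\lambda^2=2K$, and $\lambda^2 s/(2(1+s))=Ks/(1+s)<K$, so the last step fails as written. Take $\lambda:=\max(2\sqrt K,C)$ and require $s_0\ge 1$. Then $f(x_*)\ge \lambda^2 e^{-2s}s/(2(1+s))\ge Ke^{-2s}$ for $s\ge s_0$, and the rest of your argument goes through unchanged.
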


\begin{proof}
The function $f(x)=x^2\log(C/x)$ is strictly increasing on $(0,C/\sqrt e)$.
Set $x=e^{-\alpha t}y$ and suppose
\[
f(x)\le K e^{-2\alpha t}.
\]
Then
\[
y^2 \log\!\Big(\frac{C e^{\alpha t}}{y}\Big)\le K.
\]
Since $x\in(0,C/\sqrt e)$, we have $\log(C/x)\ge 1/2$, hence
\[
\frac12 x^2 \le x^2\log(C/x)\le K e^{-2\alpha t}.
\]
Therefore
\[
y=e^{\alpha t}x\le \sqrt{2K}.
\]
It follows that
\[
\log\!\Big(\frac{C e^{\alpha t}}{y}\Big)
\ge
\alpha t+\log\!\Big(\frac{C}{\sqrt{2K}}\Big).
\]
Thus, for all sufficiently large $t$,
\[
\log\!\Big(\frac{C e^{\alpha t}}{y}\Big)\ge \frac{\alpha t}{2}.
\]
Hence
\[
y^2\frac{\alpha t}{2}\le K,
\qquad
y\le \sqrt{\frac{2K}{\alpha t}}.
\]
Substituting back gives
\[
x\le \sqrt{2K}\,\frac{e^{-\alpha t}}{\sqrt{\alpha t}}.
\]
After enlarging the constant to cover bounded $t$, this yields
\[
x \le C'\frac{e^{-\alpha t}}{\sqrt{1+\alpha t}},
\]
with $C'$ depending only on $C$ and $K$.
\end{proof}
\begin{theorem}[Conditional activation bound]
\label{thm:conditional}
Assume that along a trajectory $\rho_t$,
\[
D(\rho_t\|\sigma)\le D_0 e^{-2\alpha t},
\]
and that for all $t\in[0,T^*]$,
\[
P\rho_tP\ge a_0P,
\qquad
\varepsilon_Q(\rho_t)\le a_0/2,
\qquad
\varepsilon_Q(\rho_t)\le A_\theta c(\rho_t),
\]
with $A_\theta=\theta^{-2}-1$.

Assume further that
\[
\cA(\rho_t)\in\left(0,\sqrt{\frac{a_0}{e}}\right).
\]

Then
\[
\cA(\rho_t)
\le
C\,\frac{e^{-\alpha t}}{\sqrt{1+\alpha t}},
\]
for a constant $C>0$ depending only on $D_0,\theta,a_0$.
\end{theorem}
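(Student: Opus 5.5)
The plan is to combine the coercivity estimate \eqref{eq:activation-BKM-coherence} with the coherence-dominance hypothesis. This gives a lower bound on the entropy by a modulus of the form $x^2\log(C/x)$ evaluated at $x=\cA(\rho_t)$. Lemma~\ref{lem:inverse} then inverts that modulus.

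\emph{Step 1: pass from $c$ to $\cA$.} For $t\in[0,T^*]$, the hypothesis $\varepsilon_Q(\rho_t)\le A_\theta c(\rho_t)$ with $A_\theta=\theta^{-2}-1$ gives
\[
\cA(\rho_t)^2=c(\rho_t)+\varepsilon_Q(\rho_t)\le \theta^{-2}c(\rho_t),
\]
that is, $c(\rho_t)\ge\theta^2\cA(\rho_t)^2$, or $R(\rho_t)\ge\theta$. Separately, $\varepsilon_Q(\rho_t)\le\cA(\rho_t)^2$ gives
\[
\log\bigl(a_0/\varepsilon_Q(\rho_t)\bigr)\ge 2\log\bigl(\sqrt{a_0}/\cA(\rho_t)\bigr).
\]
This logarithm is at least $1$ because $\cA(\rho_t)<\sqrt{a_0/e}$, so in particular it is nonnegative. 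Multiplying the two lower bounds is therefore legitimate. Together with \eqref{eq:activation-BKM-coherence}, which applies because $P\rho_tP\ge a_0P$ and $\varepsilon_Q(\rho_t)\le a_0/2$, and with \eqref{eq:BKM-main}, we get
\[
D(\rho_t\|\sigma)\ge 2\theta^2\,\cA(\rho_t)^2\log\!\Bigl(\frac{\sqrt{a_0}}{\cA(\rho_t)}\Bigr).
\]
Here $\sigma$ is assumed faithful and block diagonal, as in the entropy-to-coherence subsection, so that the chain rule applies. If $\varepsilon_Q(\rho_t)=0$, then $c(\rho_t)=0$ and $\cA(\rho_t)=0$, which the hypothesis $\cA(\rho_t)>0$ excludes.

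\emph{Step 2: apply the inversion lemma.} Set $x=\cA(\rho_t)$ and $C_0=\sqrt{a_0}$. The assumption $x<\sqrt{a_0/e}=C_0/\sqrt e$ places $x$ in the domain of Lemma~\ref{lem:inverse}. Step 1 and the entropy decay give
\[
f(x)=x^2\log(C_0/x)\le \frac{D_0}{2\theta^2}e^{-2\alpha t}.
\]
Lemma~\ref{lem:inverse} with $K=D_0/(2\theta^2)$ then yields $x\le C'e^{-\alpha t}/\sqrt{1+\alpha t}$, where $C'$ depends only on $C_0$ and $K$, hence only on $a_0$, $D_0$ and $\theta$.

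\emph{Main obstacle.} The delicate point is the uniformity of the constant in Lemma~\ref{lem:inverse}. Its proof passes through "for all sufficiently large $t$", and the threshold is where
\[
\alpha t+\log\bigl(C_0/\sqrt{2K}\bigr)\ge \alpha t/2,
\]
which depends on the combination $\alpha t$ and on $C_0,K$. The plan is to make this explicit by working in the variable $\tau=\alpha t$. For $\tau\ge\tau_0:=2\max\{0,\log(\sqrt{2K}/C_0)\}$, the lemma gives $x\le\sqrt{4K/\tau}\,e^{-\tau}$. For $\tau\le\tau_0$, the crude bound $x\le\sqrt{2K}\,e^{-\tau}$ already holds, since $\tfrac12x^2\le f(x)$; dividing by $1/\sqrt{1+\tau}\ge 1/\sqrt{1+\tau_0}$ costs only a factor depending on $K$ and $C_0$. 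Taking the maximum of the two constants gives $C$ independent of $\alpha$ and $t$.

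Finally, the hypotheses are stated for $t\in[0,T^*]$, so the conclusion is asserted on that window. It is where Step 1 applies, and the entropy bound is available for all $t$.
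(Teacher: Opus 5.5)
Your proposal is correct and follows the paper's proof: coherence dominance gives $c(\rho_t)\ge\theta^2\cA(\rho_t)^2$, combining this with \eqref{eq:activation-BKM-coherence} yields $2\theta^2\cA(\rho_t)^2\log(\sqrt{a_0}/\cA(\rho_t))\le D_0e^{-2\alpha t}$, and Lemma~\ref{lem:inverse} with $C=\sqrt{a_0}$ and $K=D_0/(2\theta^2)$ finishes the argument. Your explicit handling of the constant in the variable $\tau=\alpha t$ sharpens the paper's ``enlarge the constant'' step, but you should take $\tau_0:=\max\{1,\,2\log(\sqrt{2K}/C_0)\}$ rather than allowing $\tau_0=0$: the large-$\tau$ bound $\sqrt{2K/\tau}\,e^{-\tau}$ converts to $2\sqrt{K}\,e^{-\tau}/\sqrt{1+\tau}$ only when $\tau\ge1$.
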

\begin{proof}
Combining \eqref{eq:activation-BKM-coherence} with coherence dominance gives
\[
2\theta^2 \cA(\rho_t)^2
\log\!\left(\frac{\sqrt{a_0}}{\cA(\rho_t)}\right)
\le D_0 e^{-2\alpha t}.
\]
Applying Lemma~\ref{lem:inverse} yields the stated bound.
\end{proof}
\begin{remark}[Small-branch condition]
The branch condition is preserved if the trajectory remains in the monotone regime of \(x^2\log(C/x)\), for instance under sufficiently small initial entropy.
\end{remark}

\begin{remark}[Role of coherence dominance]
The coherence-dominant condition is only needed to obtain the closed-form
logarithmic rate. Theorem~\ref{thm:split} below provides a bound without it.
\end{remark}

\subsection{Split activation bounds without coherence dominance}

\begin{theorem}[Split activation bound without coherence dominance]
\label{thm:split}
Let $\rho_t$ satisfy
\[
D(\rho_t\|\sigma)\le D_0 e^{-2\alpha t},
\]
and suppose
\[
\varepsilon_Q(\rho_t)
\le
e^{-kt}\varepsilon_0+\bar\varepsilon(1-e^{-kt})=:P(t).
\]
Assume that
\[
P\rho_tP\ge a_0P,
\qquad
\varepsilon_Q(\rho_t)\le a_0/2
\quad \text{for } t\in[0,T^*].
\]
Then
\[
\cA(\rho_t)^2
\le
\frac{D_0 e^{-2\alpha t}}
{\log\!\left(a_0/\varepsilon_Q(\rho_t)\right)}
+
P(t).
\]
\end{theorem}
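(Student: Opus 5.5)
The plan is to split \(\cA(\rho_t)^2=c(\rho_t)+\varepsilon_Q(\rho_t)\) into its two pieces and bound each one separately: the coherence part through the entropy budget, and the population part through the assumed envelope \(P(t)\). No inversion of \(x^2\log(C/x)\) is needed, since the logarithmic factor is left evaluated at the actual population \(\varepsilon_Q(\rho_t)\). This is why coherence dominance can be dropped.

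\emph{Step 1 (coherence term).} Fix \(t\in[0,T^*]\). The hypotheses \(P\rho_tP\ge a_0P\) and \(\varepsilon_Q(\rho_t)\le a_0/2\) are exactly those of Lemma~\ref{lem:BKM}. We use its reference-independent form \eqref{eq:BKM-Picoh} together with the pinching chain rule for the faithful block-diagonal \(\sigma\), i.e.\ \eqref{eq:activation-BKM-coherence}. This gives
\[
c(\rho_t)\log\!\Bigl(\frac{a_0}{\varepsilon_Q(\rho_t)}\Bigr)\le D(\rho_t\|\sigma)\le D_0e^{-2\alpha t}.
\]
Since \(\varepsilon_Q(\rho_t)\le a_0/2\), the logarithm is at least \(\log 2>0\), so we may divide by it:
\[
c(\rho_t)\le \frac{D_0e^{-2\alpha t}}{\log(a_0/\varepsilon_Q(\rho_t))}.
\]

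\emph{Step 2 (degenerate case).} If \(\varepsilon_Q(\rho_t)=0\), then the positivity convention (the Schur complement bound \(c\le\|A\|_\infty\varepsilon_Q\)) gives \(c(\rho_t)=0\). We read the first term on the right as \(0\), since \(\log(a_0/0)=+\infty\), and the bound holds trivially.

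\emph{Step 3 (population term and conclusion).} Use \(\varepsilon_Q(\rho_t)\le P(t)\) directly and add the two estimates to obtain the claim.

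The only point needing care is Step 1. One must confirm that \(\sigma\) is faithful and block diagonal, so that the pinching decomposition and \(D(\rho_t\|\sigma)<\infty\) hold. One must also confirm that the logarithm is strictly positive, which is exactly what the condition \(\varepsilon_Q\le a_0/2\) supplies. Beyond this the argument is bookkeeping.
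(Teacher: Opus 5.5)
Your proposal is correct and is essentially the paper's proof: you divide the coherence bound \eqref{eq:activation-BKM-coherence} by $\log(a_0/\varepsilon_Q(\rho_t))\ge\log 2>0$ and then add $\varepsilon_Q(\rho_t)\le P(t)$, treating the case $\varepsilon_Q(\rho_t)=0$ (which forces $c(\rho_t)=0$) exactly as the paper's accompanying remark does. Your caveat that $\sigma$ must be faithful and block diagonal, which is the section's standing assumption, is the right one: without block-diagonality the inequality $D(\rho\|\sigma)\ge D(\rho\|\Pi\rho)$ can fail.
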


\begin{proof}
From \eqref{eq:activation-BKM-coherence},
\[
c(\rho_t)
\le
\frac{D_0 e^{-2\alpha t}}
{\log(a_0/\varepsilon_Q(\rho_t))}.
\]
Adding $\varepsilon_Q(\rho_t)$ yields the result.
\end{proof}

\begin{remark}
If $\varepsilon_Q(\rho_t)=0$, then $c(\rho_t)=0$ and the bound holds trivially.
\end{remark}

\begin{remark}[Explicit form]
If $P(t)<a_0$, then
\[
\cA(\rho_t)^2
\le
\frac{D_0 e^{-2\alpha t}}{\log(a_0/P(t))}
+
P(t).
\]
\end{remark}

\begin{remark}
This bound separates coherence and population contributions and does not
require coherence dominance.
\end{remark}

\subsection{Absorbing regime}

\begin{theorem}[Absorbing activation bound] \label{thm:absorbing}
Assume \(\bar\varepsilon=0\) and
\[
\varepsilon_Q(\rho_t)\le e^{-kt}\varepsilon_0,
\qquad
0<\varepsilon_0\le a_0/2.
\]

If
\[
D(\rho_t\Vert\sigma)\le D_0 e^{-2\alpha t},
\qquad
P\rho_tP\ge a_0P,
\]
then
\[
\cA(\rho_t)^2
\le
\frac{
D_0 e^{-2\alpha t}
}{
\log(a_0/\varepsilon_0)+kt
}
+
\varepsilon_0 e^{-kt}.
\]

Consequently,
\[
\cA(\rho_t)
\le
\sqrt{
\frac{
D_0 e^{-2\alpha t}
}{
\log(a_0/\varepsilon_0)+kt
}
}
+
\sqrt{\varepsilon_0}\,e^{-kt/2}.
\]
\end{theorem}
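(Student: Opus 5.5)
The argument runs parallel to the proof of Theorem~\ref{thm:split}; the only new ingredient is that the absorbing population bound makes the logarithmic factor grow linearly in $t$.

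\textbf{Step 1: check the hypotheses of the coercivity estimate.} From $\varepsilon_Q(\rho_t)\le e^{-kt}\varepsilon_0\le\varepsilon_0\le a_0/2$, the condition $\varepsilon_Q(\rho_t)\le a_0/2$ holds for every $t\ge0$. Together with $P\rho_tP\ge a_0P$, the hypotheses of Lemma~\ref{lem:BKM} (equivalently of \eqref{eq:activation-BKM-coherence}) are therefore met at each time. If $\varepsilon_Q(\rho_t)=0$, positivity forces $c(\rho_t)=0$, and the claimed bound is trivial. So assume $\varepsilon_Q(\rho_t)>0$.

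\textbf{Step 2: lower-bound the logarithm.} Using $\varepsilon_Q(\rho_t)\le\varepsilon_0e^{-kt}$ and the monotonicity of $\log$,
\[
\log\!\Bigl(\frac{a_0}{\varepsilon_Q(\rho_t)}\Bigr)\ge\log\!\Bigl(\frac{a_0}{\varepsilon_0}\Bigr)+kt\ge\log 2>0 .
\]
Combining this with \eqref{eq:activation-BKM-coherence} and the entropy budget gives
\[
c(\rho_t)\Bigl(\log(a_0/\varepsilon_0)+kt\Bigr)\le c(\rho_t)\log\!\Bigl(\frac{a_0}{\varepsilon_Q(\rho_t)}\Bigr)\le D(\rho_t\|\sigma)\le D_0e^{-2\alpha t}.
\]
The denominator is positive, so we may divide by it. Adding $\varepsilon_Q(\rho_t)\le\varepsilon_0e^{-kt}$ then yields the first displayed bound for $\cA(\rho_t)^2=c(\rho_t)+\varepsilon_Q(\rho_t)$.

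\textbf{Step 3: pass to $\cA$.} Take square roots and use $\sqrt{x+y}\le\sqrt x+\sqrt y$ for $x,y\ge0$.

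\textbf{Main point of care.} The argument is elementary; the step that needs attention is Step~1. There we must make sure that \eqref{eq:activation-BKM-coherence} applies with the given reference $\sigma$. It is stated for faithful block-diagonal $\sigma$, and through Lemma~\ref{lem:BKM} for $\sigma_\varepsilon$. Its proof, however, only uses $D(\rho\|\sigma)\ge D(\rho\|\Pi\rho)$ and the reference-independent bound \eqref{eq:BKM-Picoh}. I would state explicitly that $\sigma$ is taken block diagonal with respect to $P\oplus Q$, as in the entropy-to-coherence subsection. This also covers the non-faithful case when $D(\rho_t\|\sigma)<\infty$: the pinching identity then still holds, since $\log\sigma$ is block diagonal on $\supp\sigma\supseteq\supp\rho_t$.
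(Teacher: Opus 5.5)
Your proof is correct and is essentially the paper's argument. The paper applies Theorem~\ref{thm:split} with $P(t)=\varepsilon_0e^{-kt}\le\varepsilon_0\le a_0/2$ and substitutes $\log(a_0/P(t))=\log(a_0/\varepsilon_0)+kt$, which is your Step~2 carried out directly from \eqref{eq:activation-BKM-coherence}, followed by the same $\sqrt{x+y}\le\sqrt{x}+\sqrt{y}$ step; both versions use $k\ge0$ from Assumption~\ref{ass:rates} when writing $e^{-kt}\varepsilon_0\le\varepsilon_0$, and your remark that only block-diagonality of $\sigma$ is needed is a valid clarification.
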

\begin{proof}
Since \(\bar\varepsilon=0\), the population bound in
Theorem~\ref{thm:split} becomes
\[
P(t)=\varepsilon_0 e^{-kt}.
\]

Moreover,
\[
P(t)\le \varepsilon_0\le a_0/2<a_0,
\]
so the small-population branch in Theorem~\ref{thm:split} applies for all
\(t\ge0\). Hence
\[
\cA(\rho_t)^2
\le
\frac{
D_0 e^{-2\alpha t}
}{
\log(a_0/P(t))
}
+
P(t).
\]

Substituting \(P(t)=\varepsilon_0 e^{-kt}\) gives
\[
\log\frac{a_0}{P(t)}
=
\log\frac{a_0}{\varepsilon_0}+kt,
\]
and therefore
\[
\cA(\rho_t)^2
\le
\frac{
D_0 e^{-2\alpha t}
}{
\log(a_0/\varepsilon_0)+kt
}
+
\varepsilon_0 e^{-kt}.
\]

Taking square roots and using
\[
\sqrt{x+y}\le \sqrt{x}+\sqrt{y}
\qquad (x,y\ge0)
\]
gives
\[
\cA(\rho_t)
\le
\sqrt{
\frac{
D_0 e^{-2\alpha t}
}{
\log(a_0/\varepsilon_0)+kt
}
}
+
\sqrt{\varepsilon_0}\,e^{-kt/2}.
\]
\end{proof}

\section{Application to Davies Semigroups}\label{sec:main}

\paragraph{Framework status.}
The preceding entropy--activation bounds are structural and independent
of a particular semigroup model. We now record how the Davies estimates
from Section~\ref{sec:Davies} provide the dynamical inputs used in the
finite-window applications below.

\begin{assumption}[Entropy budget]\label{ass:MLSI}
For a chosen faithful block-diagonal reference state $\sigma_\varepsilon$, assume that
\[
D(\rho_t\|\sigma_\varepsilon) \le D_0 e^{-2\alpha t}
\qquad \text{for all } t\in[0,T^*],
\]
where $D_0:=D(\rho_0\|\sigma_\varepsilon)$.
This entropy budget is an external input along the trajectory; it is not asserted
as a generic consequence of the Davies generator. A concrete finite-window realization is given in the example below.
\end{assumption}

\begin{remark}[Stationary-state variant]
If a faithful stationary state $\omega$ is available and an entropy decay estimate
$D(\rho_t\|\omega)\le D(\rho_0\|\omega)e^{-2\alpha t}$ holds, then one may use that
stationary entropy budget directly with the structural bounds of
Section~\ref{sec:entropy-activation}. Alternatively, any separate comparison between
$D(\rho_t\|\omega)$ and $D(\rho_t\|\sigma_\varepsilon)$ may be used to produce the
input bound above.
\end{remark}

\begin{corollary}[Davies-model conditional activation bound]\label{thm:main}
Let Assumptions~\ref{ass:secular}, \ref{ass:rates}, \ref{ass:local}, and
\ref{ass:MLSI} hold. Fix $0<\theta<1$ and set $A_\theta:=\theta^{-2}-1$.
Suppose that for all $t\in[0,T^*]$,
\begin{align}
  &\normone{\rho_t-\sigma_\varepsilon} \le \delta_0,
  \qquad
  e^{-kt}\varepsilon_0 + \bar\varepsilon(1-e^{-kt}) \le \frac{a_0}{2},
  \tag{LC}\label{eq:LC}\\
  &e^{-kt}\varepsilon_0 + \bar\varepsilon(1-e^{-kt})
  \le A_\theta e^{-2\Gamma_{\max}t}c_0.
  \tag{CD}\label{eq:CD}
\end{align}
Assume also that the trajectory remains on the small branch,
\[
\cA(\rho_t)\in\left(0,\sqrt{a_0/e}\right)
\qquad \text{for all }t\in[0,T^*].
\]
Then there exists a constant $C>0$, depending only on $D_0,\theta,a_0$, such that
\[
\cA(\rho_t)\le C\frac{e^{-\alpha t}}{\sqrt{1+\alpha t}}
\qquad \text{for all }t\in[0,T^*].
\]
\end{corollary}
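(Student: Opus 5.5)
The plan is to reduce the corollary to Theorem~\ref{thm:conditional} with $\sigma=\sigma_\varepsilon$. This means checking, for each $t\in[0,T^*]$, the three pointwise hypotheses of that theorem:
\[
P\rho_tP\ge a_0P,\qquad \varepsilon_Q(\rho_t)\le a_0/2,\qquad \varepsilon_Q(\rho_t)\le A_\theta c(\rho_t).
\]
The remaining inputs are already available. The entropy budget $D(\rho_t\|\sigma_\varepsilon)\le D_0e^{-2\alpha t}$ is Assumption~\ref{ass:MLSI}, and the small-branch condition $\cA(\rho_t)\in(0,\sqrt{a_0/e})$ is assumed verbatim. Throughout, $\varepsilon_0:=\varepsilon_Q(\rho_0)$ and $c_0:=c(\rho_0)$, and $\sigma_\varepsilon$ is faithful and block diagonal, so the entropy-to-coherence inequality \eqref{eq:activation-BKM-coherence} applies to it.

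\textbf{Step 1 (support block).} The first half of \eqref{eq:LC} gives $\normone{\rho_t-\sigma_\varepsilon}\le\delta_0$. Assumption~\ref{ass:local} then yields $P\rho_tP\ge a_0P$ directly.

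\textbf{Step 2 (population).} By \eqref{eq:eps-upper} in Proposition~\ref{prop:Davies}, which holds under Assumptions~\ref{ass:secular}--\ref{ass:rates},
\[
\varepsilon_Q(\rho_t)\le e^{-kt}\varepsilon_0+\bar\varepsilon(1-e^{-kt}).
\]
The second half of \eqref{eq:LC} bounds the right-hand side by $a_0/2$.

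\textbf{Step 3 (coherence dominance).} Chaining the same population bound with \eqref{eq:CD} and then with \eqref{eq:c-lower} gives
\[
\varepsilon_Q(\rho_t)\le A_\theta e^{-2\Gamma_{\max}t}c_0\le A_\theta c(\rho_t).
\]
This is the form of coherence dominance used in Theorem~\ref{thm:conditional}. Equivalently, $R(t)^2\ge\theta^2$, consistent with Corollary~\ref{cor:ratio}, since $c\ge \theta^2(c+\varepsilon_Q)$ is equivalent to $\varepsilon_Q\le(\theta^{-2}-1)c$.

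With these checks, Theorem~\ref{thm:conditional} produces the bound with a constant depending only on $D_0,\theta,a_0$.

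The step needing the most care is the constant in Theorem~\ref{thm:conditional}, because the corollary promises uniformity over $t\in[0,T^*]$ and the inversion there goes through Lemma~\ref{lem:inverse}. I would make it explicit as follows. Write $x=\cA(\rho_t)$ and $C_1=\sqrt{a_0}$. By \eqref{eq:BKM-main} and $c\ge\theta^2 x^2$,
\[
x^2\log(C_1/x)\le \frac{D_0}{2\theta^2}e^{-2\alpha t}.
\]
The small-branch hypothesis gives $x<\sqrt{a_0/e}=C_1/\sqrt e$, so Lemma~\ref{lem:inverse} applies with $C=C_1$ and $K=D_0/(2\theta^2)$.

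I would then check directly that the constant is uniform in $t$. Put $y=e^{\alpha t}x$. From $\log(C_1/x)\ge1/2$ one gets $y\le\sqrt{2K}$, and the logarithm then satisfies
\[
\log(C_1/x)\ge \max\bigl\{\tfrac12,\ \alpha t+\log(C_1/\sqrt{2K})\bigr\}.
\]
I expect the right-hand side to be bounded below by a positive multiple of $1+\alpha t$, with the multiple depending only on $K$ and $C_1$, hence only on $D_0,\theta,a_0$. Taking square roots then gives the stated estimate.
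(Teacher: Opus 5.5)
Your proposal is correct and follows the paper's proof essentially verbatim: you verify the three pointwise hypotheses of Theorem~\ref{thm:conditional} and then apply that theorem with $\sigma=\sigma_\varepsilon$. The support block comes from (LC) and Assumption~\ref{ass:local}, the population bound from \eqref{eq:eps-upper} and (LC), and coherence dominance from chaining \eqref{eq:eps-upper}, (CD) and \eqref{eq:c-lower}. Your extra uniformity check is not needed for the reduction, but the inequality you expect does hold: $\max\{\tfrac12,\,s-L\}\ge (1+s)/(4(1+L))$ for all $s\ge0$ with $L:=\max\{0,\log(\sqrt{2K}/C_1)\}$, which gives the explicit $t$- and $\alpha$-independent constant $2\sqrt{K(1+L)}$.
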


\begin{proof}
By Proposition~\ref{prop:Davies},
\[
\varepsilon_Q(\rho_t)\le e^{-kt}\varepsilon_0+\bar\varepsilon(1-e^{-kt}),
\qquad
c(\rho_t)\ge e^{-2\Gamma_{\max}t}c_0.
\]
Condition~\eqref{eq:CD} therefore implies
$\varepsilon_Q(\rho_t)\le A_\theta c(\rho_t)$ for all $t\in[0,T^*]$.
Condition~\eqref{eq:LC}, together with Assumption~\ref{ass:local}, gives
$P\rho_tP\ge a_0P$ and $\varepsilon_Q(\rho_t)\le a_0/2$ throughout the same window.
The conclusion coincides with Theorem~\ref{thm:conditional} applied with
$\sigma=\sigma_\varepsilon$.
\end{proof}

\begin{proposition}[Verification in Davies models]\label{prop:Davies-verification}
Consider a Davies generator satisfying the secular $P$--$Q$ decoupling and
cross-boundary rate assumptions of Section~\ref{sec:Davies}. Then there are constants
$\Gamma_{\max},k,\bar\varepsilon$ such that
\[
c(\rho_t)\ge e^{-2\Gamma_{\max}t}c_0,
\qquad
\varepsilon_Q(\rho_t)\le e^{-kt}\varepsilon_0+\bar\varepsilon(1-e^{-kt}).
\]
If, in addition, the initial data and parameters satisfy the local support condition
and the coherence-dominance window condition, then the hypotheses of
Corollary~\ref{thm:main} hold on the corresponding interval $[0,T^*]$.
In low-temperature detailed-balance models, Corollary~\ref{cor:low-T} gives an
explicit sufficient window.
\end{proposition}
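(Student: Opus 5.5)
The statement is essentially a bookkeeping result, so the plan is to assemble it from Proposition~\ref{prop:Davies} and Corollary~\ref{thm:main}.

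\textbf{Step 1 (constants).} Under Assumptions~\ref{ass:secular}--\ref{ass:rates}, we take
\[
\Gamma_{\max}=\max_{p,e}\Gamma_{pe},\qquad k=\mu+\eta,\qquad \bar\varepsilon=\mu/(\mu+\eta).
\]
All three are finite because $\cH$ is finite-dimensional, and $k>0$ by assumption. Proposition~\ref{prop:Davies} then gives both displayed inequalities with $c_0=c(\rho_0)$ and $\varepsilon_0=\varepsilon_Q(\rho_0)$.

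The only genuine content here is the population estimate. Secular decoupling makes the diagonal of $\rho_t$ in the energy eigenbasis evolve by the classical master equation with rates $W_{nm}$. Summing over $e\in Q$, the inflow is at most $\mu\,(1-\varepsilon_Q)$ by the definition of $\mu$, and the outflow is at least $\eta\,\varepsilon_Q$ by the definition of $\eta$. Gr\"onwall's inequality then closes the bound. I would simply cite the proof of Proposition~\ref{prop:Davies} for this.

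\textbf{Step 2 (hypotheses of Corollary~\ref{thm:main}).} I would interpret ``local support condition'' as the two conditions in \eqref{eq:LC} together with Assumption~\ref{ass:local}, and ``coherence-dominance window'' as \eqref{eq:CD}, each holding on $[0,T^*]$. Then the hypotheses of Corollary~\ref{thm:main} hold by definition, with the following inputs:
\begin{itemize}
\item $P\rho_tP\ge a_0P$ comes from Lemma~\ref{lem:local-inv};
\item $\varepsilon_Q(\rho_t)\le a_0/2$ comes from chaining Step~1 with \eqref{eq:LC};
\item $\varepsilon_Q(\rho_t)\le A_\theta c(\rho_t)$ comes from chaining Step~1 with \eqref{eq:CD}, using the lower bound on $c(\rho_t)$.
\end{itemize}
The entropy budget (Assumption~\ref{ass:MLSI}) and the small-branch condition are not produced by the Davies structure. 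They must therefore be listed as part of the ``in addition'' hypotheses.

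\textbf{Step 3 (main obstacle).} The main difficulty is ensuring that the trajectory condition $\normone{\rho_t-\sigma_\varepsilon}\le\delta_0$ holds on the whole window rather than only at $t=0$. My first attempt would be to shrink $T^*$ by continuity of $t\mapsto\rho_t$. Alternatively, I would use Pinsker's inequality with the entropy budget:
\[
\normone{\rho_t-\sigma_\varepsilon}\le\sqrt{2D_0}\,e^{-\alpha t}\le\sqrt{2D_0}.
\]
This reduces the condition to the initial smallness requirement $D_0\le\delta_0^2/2$. Finally, the reference to the low-temperature corollary is only a pointer and needs no proof here.
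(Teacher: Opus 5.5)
Your proposal is correct and follows the paper's own bookkeeping proof. Like the paper, you quote the two estimates from Proposition~\ref{prop:Davies} and identify the remaining hypotheses of Corollary~\ref{thm:main} with \eqref{eq:LC} and \eqref{eq:CD}; the one difference is that you assume \eqref{eq:CD} outright, whereas the paper cites Proposition~\ref{prop:window} and Corollary~\ref{cor:low-T} as supplying \eqref{eq:CD} on an explicit window $[0,T^*]$. Your other two points are correct refinements that the paper's proof leaves implicit: the entropy budget (Assumption~\ref{ass:MLSI}) and the small-branch condition are not produced by the Davies structure and must be added as hypotheses, and Pinsker's inequality reduces the trajectory condition $\normone{\rho_t-\sigma_\varepsilon}\le\delta_0$ to the initial requirement $D_0\le\delta_0^2/2$.
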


\begin{proof}
The two displayed estimates are Proposition~\ref{prop:Davies}. The remaining claims
are the local condition~\eqref{eq:LC}, the coherence-dominance condition~\eqref{eq:CD},
and the window criteria of Proposition~\ref{prop:window} and Corollary~\ref{cor:low-T}.
\end{proof}
\paragraph{Finite-window example.}
The following example shows that the local support,
coherence-dominance, and entropy-budget assumptions can be satisfied
simultaneously on an explicit finite time interval.
\begin{example}[Absorbing qutrit Davies channel and finite-window entropy bound]
Consider the three-level system
\[
\mathcal H=\mathrm{span}\{|0\rangle,|1\rangle,|2\rangle\},
\qquad
P=|0\rangle\langle0|+|1\rangle\langle1|,
\qquad
Q=|2\rangle\langle2|.
\]
Let \(H=\sum_{j=0}^2E_j|j\rangle\langle j|\), with \(E_0,E_1,E_2\) distinct, and consider the Davies generator with jump operator
\[
L=\sqrt{\gamma}\,|0\rangle\langle2|.
\]
This describes irreversible decay from \(|2\rangle\) to \(|0\rangle\), while \(|1\rangle\) is a spectator state inside \(P\).

Take
\[
\rho_0=
\begin{pmatrix}
0.89 & 0 & \sqrt{0.89\cdot0.01}\\
0 & 0.10 & 0\\
\sqrt{0.89\cdot0.01} & 0 & 0.01
\end{pmatrix}.
\]
Then
\[
\varepsilon_Q(\rho_t)=0.01e^{-\gamma t},
\qquad
c(\rho_t)=0.0089e^{-\gamma t},
\]
and
\[
P\rho_tP=
\begin{pmatrix}
0.89+0.01(1-e^{-\gamma t}) & 0\\
0 & 0.10
\end{pmatrix}.
\]

Choose
\[
a_0=0.08,
\qquad
\theta=0.65,
\qquad
A_\theta=\theta^{-2}-1\approx1.3669.
\]
Then, for all \(t\ge0\),
\[
P\rho_tP\ge0.10P\ge a_0P,
\qquad
\varepsilon_Q(\rho_t)\le0.01<\frac{a_0}{2},
\]
and
\[
\varepsilon_Q(\rho_t)
=
0.01e^{-\gamma t}
\le
A_\theta\,0.0089e^{-\gamma t}
=
A_\theta c(\rho_t).
\]
Moreover,
\[
A(\rho_t)^2
=
c(\rho_t)+\varepsilon_Q(\rho_t)
=
0.0189e^{-\gamma t}
<
\frac{a_0}{e},
\]
so the small-branch condition also holds globally.

To supply an entropy budget, let
\[
\sigma=\operatorname{diag}(0.9,0.1,0),
\qquad
\sigma_\varepsilon=(1-\varepsilon)\sigma+\frac{\varepsilon}{3}I,
\]
and fix \(\varepsilon=0.01\). Writing \(x=e^{-\gamma t}\), the relative entropy
\(D(\rho_t\|\sigma_\varepsilon)\) can be computed explicitly from the eigenvalues of the active \((0,2)\)-block. On the finite interval \(x\in[1/2,1]\), equivalently
\[
0\le t\le T:=\frac{\log2}{\gamma},
\]
A direct computation gives
\[
\frac{d}{dx}D(\rho_x\|\sigma_\varepsilon)
=
p\log\frac{s_0}{s_2}
+
\frac{d}{dx}\operatorname{Tr}(\rho_x\log\rho_x).
\]
Here \(s_2=\varepsilon/3\) is the diagonal entry of \(\sigma_\varepsilon\) on the state \(|2\rangle\). The right-hand side is continuous on \(x\in[1/2,1)\), and as
\(x\uparrow1\) its entropy-derivative contribution diverges to \(+\infty\). For the numerical parameters used here, a direct evaluation gives
\[
\inf_{x\in[1/2,1)}
\frac{d}{dx}D(\rho_x\|\sigma_\varepsilon)>0.
\]
Thus
\[
m:=\gamma\inf_{x\in[1/2,1)}
x\frac{d}{dx}D(\rho_x\|\sigma_\varepsilon)>0,
\]
which verifies the finite-window entropy bound. Since \(dx/dt=-\gamma x\), we have
\[
\frac{d}{dt}D(\rho_t\|\sigma_\varepsilon)\le -m
\]
for all \(0\le t\le T\). Therefore
\[
D(\rho_t\|\sigma_\varepsilon)\le D_0-mt.
\]
After decreasing \(m\) if necessary so that \(D_0-mT\ge0\), the elementary inequality
\(1-y\le e^{-y}\) gives
\[
D(\rho_t\|\sigma_\varepsilon)
\le
D_0e^{-2\alpha_Tt},
\qquad
\alpha_T:=\frac{m}{2D_0}>0.
\]
Thus Assumption~\ref{ass:MLSI} holds on the explicit interval \(0\le t\le(\log2)/\gamma\).

Consequently, all hypotheses of Theorem~\ref{thm:conditional}
and Corollary~\ref{thm:main} are satisfied on this interval,
and the conditional bound yields
\[
A(\rho_t)
\le
C\frac{e^{-\alpha_Tt}}{\sqrt{1+\alpha_Tt}},
\qquad
0\le t\le \frac{\log2}{\gamma}.
\]
This is a finite-window conditional bound, not a global convergence estimate, since
\(\sigma_\varepsilon\) is not the limiting state of the dynamics.
\end{example}
\begin{remark}[Interpretation]
This example shows that the hypotheses entering
Theorem~\ref{thm:conditional} and Corollary~\ref{thm:main}
can be satisfied simultaneously on a finite time window.
The resulting bound does not represent a universal dynamical
convergence rate for \(A(\rho_t)\).
\end{remark}

The next section studies an exactly solvable zero-temperature model in
which the quantities entering the entropy--activation bounds can be
computed explicitly.

\section{An Explicit Zero-Temperature Davies Example}
\label{sec:explicit-example}

We present an explicit finite-dimensional Davies model for which the
assumptions of
Sections~\ref{sec:entropy-activation}--\ref{sec:Davies}
can be verified directly.

Although the ambient Hilbert space has dimension \(2^n\), the chosen
initial data restricts the dynamics to an invariant two-dimensional sector.
This permits complete analytic control of the quantities entering the
entropy--activation bounds.

\subsection{Definition of the model}

Let $n\ge1$ and consider
\[
\mathcal H = (\mathbb C^2)^{\otimes n}.
\]
Let
\[
\{|x_1\cdots x_n\rangle : x_j\in\{0,1\}\}
\]
be the computational basis.
Choose site energies
\[
\varepsilon_j = \Delta\, 2^j,
\qquad \Delta>0,
\]
and define the diagonal Hamiltonian
\[
H=\sum_{j=1}^n \varepsilon_j n_j,
\qquad
n_j = |1\rangle\langle1|_j.
\]

The ground state is
\[
|0\rangle := |0\cdots0\rangle,
\]
with energy $0$.
Every excited basis vector
\[
|S\rangle,
\qquad
S\subseteq\{1,\dots,n\},
\quad
S\neq\varnothing,
\]
has energy
\[
E_S = \sum_{j\in S}\varepsilon_j.
\]
Since the $\varepsilon_j$ are powers of two, all energies $E_S$ are distinct.

Define
\[
P = |0\rangle\langle0|,
\qquad
Q = I-P.
\]
Then $\rank(P)=1$ and $\dim(Q\mathcal H)=2^n-1$.
Both $P\mathcal H$ and $Q\mathcal H$ are sums of energy eigenspaces, and the
cross-boundary Bohr frequencies
\[
\omega_{0S}= -E_S
\]
are pairwise distinct.

We couple the system to a zero-temperature bosonic bath.
The only nonzero jump operators are
\[
A_S^- = |0\rangle\langle S|,
\qquad
S\neq\varnothing,
\]
with rates $\kappa_S^->0$.
All upward rates vanish.
The Davies generator in the secular limit is therefore
\begin{equation}
\mathcal L(\rho)
=
-i[H,\rho]
+
\sum_{S\neq\varnothing}
\kappa_S^-
\left(
A_S^- \rho A_S^{-*}
-\frac12
\{A_S^{-*}A_S^-,\rho\}
\right).
\label{eq:explicit-davies}
\end{equation}
For simplicity we choose equal rates
\[
\kappa_S^-=\gamma>0.
\]

The stationary state is the rank-deficient projector
\[
\sigma = |0\rangle\langle0|.
\]
As in Section~\ref{sec:setup}, we regularize it by
\[
\sigma_\varepsilon
=
(1-\varepsilon)\sigma
+
\frac{\varepsilon}{d}I,
\qquad
d=2^n,
\]
with fixed $\varepsilon\in(0,1)$.

\subsection{Verification of the assumptions}

\paragraph{Secular $P$--$Q$ decoupling.}

For every excited state $|S\rangle$,
\[
\mathcal L(|0\rangle\langle S|)
=
(-\Gamma_S-iE_S)|0\rangle\langle S|,
\qquad
\Gamma_S=\frac{\gamma}{2}.
\]
Hence the operators $\{|0\rangle\langle S|\}$ diagonalize the coherence sector
$P\mathcal B(\mathcal H)Q$.
Since the Bohr frequencies $E_S$ are distinct,
Assumption~\ref{ass:secular} holds with
\[
\Gamma_{\max}=\frac{\gamma}{2}.
\]

\paragraph{Population bounds.}

The kernel population
\[
\varepsilon_Q(\rho_t)=\Tr(Q\rho_t)
\]
obeys
\[
\frac{d}{dt}\varepsilon_Q(\rho_t)
=
-
\sum_{S\neq\varnothing}
\kappa_S^-
\langle S|\rho_t|S\rangle.
\]
Since $\kappa_S^-=\gamma$,
\[
\dot\varepsilon_Q(\rho_t)
\le
-\gamma \varepsilon_Q(\rho_t),
\]
and therefore
\begin{equation}
\varepsilon_Q(\rho_t)
\le
\varepsilon_Q(\rho_0)e^{-\gamma t}.
\label{eq:explicit-pop}
\end{equation}
Thus Assumption~\ref{ass:rates} holds with
\[
k=\gamma,
\qquad
\bar\varepsilon=0.
\]

\paragraph{Rate ordering.}

Since
\[
k=\gamma,
\qquad
\Gamma_{\max}=\frac{\gamma}{2},
\]
one has
\[
k=2\Gamma_{\max}.
\]

\subsection{Single-mode initial state}

Consider the initial pure state
\begin{equation}
|\psi_0\rangle
=
\sqrt{1-\varepsilon_0}\,|0\rangle
+
\sqrt{\varepsilon_0}\,|S_0\rangle,
\qquad
\varepsilon_0\in(0,1),
\label{eq:single-mode-init}
\end{equation}
where $|S_0\rangle$ is a fixed excited basis state.

The dynamics remains confined to the invariant subspace
\[
\Span\{|0\rangle,|S_0\rangle\},
\]
so the problem reduces exactly to a two-level system.

Using the eigenoperator relation above,
\[
\langle0|\rho_t|S_0\rangle
=
\sqrt{\varepsilon_0(1-\varepsilon_0)}
\,
e^{-(\gamma/2+iE_{S_0})t},
\]
while
\[
\langle S_0|\rho_t|S_0\rangle
=
\varepsilon_0 e^{-\gamma t}.
\]
Hence
\[
\rho_t
=
\begin{pmatrix}
1-\varepsilon_0 e^{-\gamma t}
&
\sqrt{\varepsilon_0(1-\varepsilon_0)}
\,e^{-\gamma t/2}e^{-iE_{S_0}t}
\\[4pt]
\sqrt{\varepsilon_0(1-\varepsilon_0)}
\,e^{-\gamma t/2}e^{iE_{S_0}t}
&
\varepsilon_0 e^{-\gamma t}
\end{pmatrix}.
\]

The quantities introduced in Definition~\ref{def:activation} are therefore
\begin{align}
c(\rho_t)
&=
\|P\rho_tQ\|_2^2
=
\varepsilon_0(1-\varepsilon_0)e^{-\gamma t},
\\[4pt]
\varepsilon_Q(\rho_t)
&=
\varepsilon_0 e^{-\gamma t},
\\[4pt]
A(\rho_t)
&=
\sqrt{\varepsilon_0(2-\varepsilon_0)}
\,e^{-\gamma t/2}.
\label{eq:explicit-activation}
\end{align}

Fix
\[
a_0 := 1-\varepsilon_0.
\]
Then
\[
P\rho_tP
=
(1-\varepsilon_0 e^{-\gamma t})P
\ge
a_0 P
\qquad
\forall t\ge0.
\]
Moreover,
\[
\varepsilon_Q(\rho_t)
=
\varepsilon_0 e^{-\gamma t}
\le
\varepsilon_0.
\]
Thus, if
\[
\varepsilon_0\le \frac{a_0}{2},
\]
the local support and small-population hypotheses of
Lemma~\ref{lem:BKM}
hold globally in time.

The small-branch condition in Theorem~\ref{thm:conditional} also holds globally provided
\[
A(\rho_0)^2=\varepsilon_0(2-\varepsilon_0)<\frac{a_0}{e}
=\frac{1-\varepsilon_0}{e}.
\]
Equivalently,
\[
e\varepsilon_0(2-\varepsilon_0)<1-\varepsilon_0.
\]
For example, any sufficiently small $\varepsilon_0>0$ satisfies this condition.

Finally,
\[
\frac{c(\rho_t)}{A(\rho_t)^2}
=
\frac{1-\varepsilon_0}{2-\varepsilon_0}.
\]
If $\varepsilon_0\le1/2$, then
\[
\frac{c(\rho_t)}{A(\rho_t)^2}
\ge
\frac13,
\]
so the coherence-dominance condition of
Theorem~\ref{thm:conditional}
holds globally with $\theta=1/2$.

\subsection{Variational reduction}

In the present single-mode configuration, the variational reduction becomes exact,
since the state already consists of a single active $2\times2$ block.
Consequently,
\[
D(\rho_t\|\Pi\rho_t)
=
\Phi(a_*(t),\varepsilon_Q(t),c(t)),
\]
where $\Phi(a,\varepsilon,c)$ denotes the corresponding two-level
relative entropy functional.

The eigenvalues of $\rho_t$ are
\[
\lambda_\pm
=
\frac{
1\pm
\sqrt{(1-2\varepsilon_Q)^2+4c}
}{2},
\]
and therefore
\[
\Phi
=
\lambda_+\log\lambda_+
+
\lambda_-\log\lambda_-
-
(1-\varepsilon_Q)\log(1-\varepsilon_Q)
-
\varepsilon_Q\log\varepsilon_Q.
\]

In the boundary regime $\varepsilon_Q\ll1$,
the boundary asymptotics derived in
Section~\ref{sec:entropy-activation}
give
\[
D(\rho_t\|\Pi\rho_t)
\sim
c(\rho_t)\log\frac1{\varepsilon_Q(\rho_t)},
\]
which exhibits explicitly the logarithmic amplification of the coherence cost near the support boundary.

\subsection{Finite-window entropy bound with rate $\gamma/2$}

We now verify Assumption~\ref{ass:MLSI} on an explicit finite time window with
\[
\alpha=\frac{\gamma}{2}.
\]
Write
\[
x=e^{-\gamma t},
\qquad
p=\varepsilon_0.
\]
In the invariant two-dimensional sector $\Span\{|0\rangle,|S_0\rangle\}$, the state is
\[
\rho_x=
\begin{pmatrix}
1-px & \sqrt{p(1-p)}\,x^{1/2}e^{-iE_{S_0}t}\\
\sqrt{p(1-p)}\,x^{1/2}e^{iE_{S_0}t} & px
\end{pmatrix}.
\]
Let
\[
s_0:=1-\varepsilon+\frac{\varepsilon}{d},
\qquad
s_Q:=\frac{\varepsilon}{d}
\]
be the diagonal entries of $\sigma_\varepsilon$ on the ground state and on the excited sector.
Then
\[
D(\rho_x\|\sigma_\varepsilon)
=
\Tr(\rho_x\log\rho_x)
-(1-px)\log s_0
-px\log s_Q .
\]
At $x=1$, the state is pure, so
\[
D_0:=D(\rho_1\|\sigma_\varepsilon)
=
-(1-p)\log s_0-p\log s_Q .
\]

We claim that, for sufficiently small fixed regularization parameter $\varepsilon>0$,
\[
D(\rho_x\|\sigma_\varepsilon)\le xD_0
\qquad
\text{for all }x\in[1/2,1].
\]
Indeed,
\[
xD_0-D(\rho_x\|\sigma_\varepsilon)
=
-\Tr(\rho_x\log\rho_x)+(1-x)\log s_0.
\]
Since
\[
S(\rho_x):=-\Tr(\rho_x\log\rho_x),
\]
this becomes
\[
xD_0-D(\rho_x\|\sigma_\varepsilon)
=
S(\rho_x)-(1-x)(-\log s_0).
\]
Thus it is enough to ensure
\[
S(\rho_x)\ge (1-x)(-\log s_0)
\qquad
\text{for }x\in[1/2,1].
\]

Define
\[
\kappa(p):=
\inf_{x\in[1/2,1)}
\frac{S(\rho_x)}{1-x}.
\]
For $p\in(0,1)$, one has $\kappa(p)>0$. Indeed, $S(\rho_x)>0$ for
$x\in[1/2,1)$, while as $x\uparrow1$ the determinant
\[
\det\rho_x=p^2x(1-x)
\]
implies
\[
S(\rho_x)\sim p^2(1-x)\log\frac1{1-x},
\]
and hence
\[
\frac{S(\rho_x)}{1-x}\to+\infty.
\]
The map
\[
x\mapsto \frac{S(\rho_x)}{1-x}
\]
is continuous on every interval \([1/2,1-\delta]\), while
\[
\frac{S(\rho_x)}{1-x}\to+\infty
\qquad
(x\uparrow1).
\]
Hence the infimum defining \(\kappa(p)\) is attained on a compact subset
of \([1/2,1)\) and is strictly positive.

Since
\[
s_0=1-\varepsilon+\frac{\varepsilon}{d}\to1
\qquad
\text{as }\varepsilon\downarrow0,
\]
we may choose $\varepsilon>0$ sufficiently small so that
\[
-\log s_0\le \kappa(p).
\]
For such a choice,
\[
D(\rho_x\|\sigma_\varepsilon)\le xD_0
\qquad
\text{for all }x\in[1/2,1].
\]
Returning to time, $x=e^{-\gamma t}$ and $x\in[1/2,1]$ precisely when
\[
0\le t\le T^*:=\frac{\log2}{\gamma}.
\]
Consequently,
\[
D(\rho_t\|\sigma_\varepsilon)
\le
D_0 e^{-\gamma t}
=
D_0 e^{-2(\gamma/2)t}
\qquad
\text{for }0\le t\le T^*.
\]
Thus Assumption~\ref{ass:MLSI} holds on this finite window with
\[
\alpha=\frac{\gamma}{2}.
\]

\subsection{Explicit activation bounds}

Applying Theorem~\ref{thm:conditional} yields
\[
A(\rho_t)
\le
C
\frac{e^{-\gamma t/2}}{\sqrt{1+\gamma t}},
\qquad
0\le t\le T^*.
\]
The true activation is
\[
A(\rho_t)
=
\sqrt{\varepsilon_0(2-\varepsilon_0)}
\,e^{-\gamma t/2}.
\]
Thus the bound reproduces the correct exponential scale, while the additional factor
\[
\frac1{\sqrt{1+\gamma t}}
\]
arises from inversion of the logarithmic entropy modulus near the support boundary.

\subsection{Multi-mode initial states and the merging reduction}

To illustrate the merging reduction,
consider instead
\[
|\psi_0\rangle
=
\sqrt{1-\varepsilon_0}\,|0\rangle
+
\sum_{j=1}^M \alpha_j |S_j\rangle,
\qquad
\sum_{j=1}^M |\alpha_j|^2=\varepsilon_0,
\]
with distinct excited states $|S_j\rangle$.

The dynamics then contains $M$ independent decaying coherences
\[
\langle0|\rho_t|S_j\rangle.
\]
Each such term determines a $2\times2$ block contribution to the coherence entropy.
The merging reduction shows that the sum of these contributions is bounded below by the entropy of a single merged effective two-level configuration with total coherence
\[
c(\rho_t)=\sum_{j=1}^M c_j(t).
\]

\paragraph{Interpretation.}
This example verifies the hypotheses of
Corollary~\ref{thm:main}
within an explicit Davies model.
The additional factor \((1+\gamma t)^{-1/2}\) arises from inversion of the logarithmic entropy modulus.

The following section records coherence-dominance windows for the
Davies-side estimates.

\section{Coherence-Dominance Windows}\label{sec:window}

\begin{proposition}[Coherence-dominance window]\label{prop:window}
Assume \(k\ge 2\Gamma_{\max}\), where
\[
A_\theta:=\theta^{-2}-1.
\]
If \(\bar\varepsilon>0\), assume
\[
A_\theta c_0-\varepsilon_0>\bar\varepsilon
\]
and define
\[
T^*
=
\frac{1}{2\Gamma_{\max}}
\log\!\Bigl(
\frac{A_\theta c_0-\varepsilon_0}{\bar\varepsilon}
\Bigr).
\]
If \(\bar\varepsilon=0\), assume
\[
\varepsilon_0\le A_\theta c_0
\]
and set \(T^*=+\infty\).
Then the coherence-dominance condition~\eqref{eq:CD}
holds for all \(t\in[0,T^*]\).
\end{proposition}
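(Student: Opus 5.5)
The plan is to verify \eqref{eq:CD} directly, i.e.\ to show
\[
e^{-kt}\varepsilon_0+\bar\varepsilon(1-e^{-kt})\le A_\theta e^{-2\Gamma_{\max}t}c_0
\qquad (t\in[0,T^*]),
\]
by bounding the left-hand side with quantities that decay at the slower rate \(2\Gamma_{\max}\). The only dynamical input needed is the rate ordering \(k\ge 2\Gamma_{\max}\). For \(t\ge0\) it gives \(e^{-kt}\le e^{-2\Gamma_{\max}t}\), and hence
\[
e^{-kt}\varepsilon_0\le e^{-2\Gamma_{\max}t}\varepsilon_0 .
\]
We also use the trivial bound \(\bar\varepsilon(1-e^{-kt})\le\bar\varepsilon\).

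\emph{Case \(\bar\varepsilon=0\).} The left-hand side is \(e^{-kt}\varepsilon_0\le e^{-2\Gamma_{\max}t}\varepsilon_0\). The hypothesis \(\varepsilon_0\le A_\theta c_0\) then gives \eqref{eq:CD} for every \(t\ge0\), which is consistent with \(T^*=+\infty\).

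\emph{Case \(\bar\varepsilon>0\).} Combining the two elementary bounds, the left-hand side is at most \(\bar\varepsilon+e^{-2\Gamma_{\max}t}\varepsilon_0\). It therefore suffices to show
\[
\bar\varepsilon+e^{-2\Gamma_{\max}t}\varepsilon_0\le A_\theta c_0e^{-2\Gamma_{\max}t}.
\]
Multiplying by \(e^{2\Gamma_{\max}t}>0\), this is equivalent to
\[
\bar\varepsilon\, e^{2\Gamma_{\max}t}\le A_\theta c_0-\varepsilon_0 .
\]
The right-hand side is positive by the assumption \(A_\theta c_0-\varepsilon_0>\bar\varepsilon>0\), so the logarithm defining \(T^*\) is well defined and positive. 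Taking logarithms, the displayed inequality holds exactly when
\[
t\le \frac{1}{2\Gamma_{\max}}\log\!\Bigl(\frac{A_\theta c_0-\varepsilon_0}{\bar\varepsilon}\Bigr)=T^*.
\]
This proves \eqref{eq:CD} on \([0,T^*]\).

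There is no serious obstacle here. The one point needing care is that discarding the factor \((1-e^{-kt})\) in the population term must still leave a usable window. It does, because \(T^*\) is defined from precisely the resulting sufficient condition. The assumption \(A_\theta c_0-\varepsilon_0>\bar\varepsilon\) is needed only so that \(T^*>0\), i.e.\ so that the window is nontrivial. If \(\Gamma_{\max}=0\), the definition of \(T^*\) would have to be read as \(+\infty\); in that case the same reduction gives \(\bar\varepsilon\le A_\theta c_0-\varepsilon_0\) for all \(t\), and the conclusion still holds.
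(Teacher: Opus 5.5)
Your proof is correct and follows the paper's argument essentially verbatim: you bound the left side of \eqref{eq:CD} by $\bar\varepsilon+\varepsilon_0e^{-2\Gamma_{\max}t}$ using $k\ge2\Gamma_{\max}$ and $1-e^{-kt}\le1$, rearrange to $t\le T^*$, and in the case $\bar\varepsilon=0$ apply $\varepsilon_0\le A_\theta c_0$ directly. Your extra remark on the degenerate case $\Gamma_{\max}=0$ is a harmless refinement that the paper does not address.
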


\begin{proof}
Suppose first that \(\bar\varepsilon>0\).
Using \(1-e^{-kt}\le1\) and \(e^{-kt}\le e^{-2\Gamma_{\max}t}\),
\[
e^{-kt}\varepsilon_0+\bar\varepsilon(1-e^{-kt})
\le
\bar\varepsilon+\varepsilon_0 e^{-2\Gamma_{\max}t}.
\]
Hence the coherence-dominance condition~\eqref{eq:CD}
follows from
\[
\bar\varepsilon
\le
(A_\theta c_0-\varepsilon_0)e^{-2\Gamma_{\max}t},
\]
which is equivalent to \(t\le T^*\).

If \(\bar\varepsilon=0\), then
\[
e^{-kt}\varepsilon_0
\le
e^{-2\Gamma_{\max}t}\varepsilon_0
\le
A_\theta e^{-2\Gamma_{\max}t}c_0,
\]
using \(\varepsilon_0\le A_\theta c_0\).
Hence~\eqref{eq:CD} holds for all \(t\ge0\).
\end{proof}

\section{Consequences of the Coherence Window}\label{sec:corollaries}

\begin{corollary}[Low-temperature window]\label{cor:low-T}
Suppose the Davies generator satisfies quantum detailed balance and $\bar\varepsilon>0$ with Gibbs
weights $\pi_p\propto e^{-\beta E_p}$, so that $W_{ep}=W_{pe}e^{-\beta(E_e-E_p)}$.
Define $\Delta E:=\min\{E_e-E_p:p\in P,\,e\in Q,\,W_{ep}\ne 0\}>0$
and $\eta_\uparrow:=\max_{p\in P}\sum_{e\in Q}W_{pe}$.
Then $\mu\le e^{-\beta\Delta E}\eta_\uparrow$ and
\[
\bar\varepsilon\le\frac{\eta_\uparrow}{\eta}\,e^{-\beta\Delta E}.
\]
Assume $k\ge 2\Gamma_{\max}$, $A_\theta c_0>\varepsilon_0$, and
\begin{equation}
A_\theta c_0 - \varepsilon_0 \;>\; \frac{\eta_\uparrow}{\eta}\,e^{-\beta\Delta E},
\label{eq:low-T-pos}
\end{equation}
which in particular implies $A_\theta c_0-\varepsilon_0>\bar\varepsilon$
so that Proposition~\ref{prop:window} applies with $T^*>0$.
Under~\eqref{eq:low-T-pos}:
\[
T^*\ge\frac{1}{2\Gamma_{\max}}\!\left[\beta\Delta E
-\log\!\Bigl(\frac{\eta_\uparrow/\eta}{A_\theta c_0-\varepsilon_0}\Bigr)\right]>0.
\]
In particular, if $\eta_\uparrow/\eta$ is bounded independently of $\beta$,
condition~\eqref{eq:low-T-pos} holds for all sufficiently large $\beta$,
and $T^*$ grows at least linearly in $\beta$.
\end{corollary}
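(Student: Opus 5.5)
The plan has three steps: bound $\mu$ using detailed balance, pass that bound to $\bar\varepsilon$, and then substitute it into the explicit formula for $T^*$ from Proposition~\ref{prop:window}.

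\textbf{Step 1: the bound on $\mu$.} Fix $p\in P$. For every $e\in Q$ with $W_{ep}\neq 0$, detailed balance gives
\[
W_{ep}=W_{pe}e^{-\beta(E_e-E_p)}\le W_{pe}e^{-\beta\Delta E},
\]
because $E_e-E_p\ge\Delta E$ for such pairs. Pairs with $W_{ep}=0$ contribute nothing to the sum. Summing over $e\in Q$ gives
\[
\sum_{e\in Q}W_{ep}\le e^{-\beta\Delta E}\sum_{e\in Q}W_{pe}\le e^{-\beta\Delta E}\eta_\uparrow .
\]
Taking the maximum over $p\in P$ yields $\mu\le e^{-\beta\Delta E}\eta_\uparrow$.

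\textbf{Step 2: the bound on $\bar\varepsilon$.} By definition $\bar\varepsilon=\mu/(\mu+\eta)\le\mu/\eta$. This needs $\eta>0$, which I will take as implicit in the ratio $\eta_\uparrow/\eta$ appearing in the statement. Combining with Step 1 gives the stated bound on $\bar\varepsilon$.

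\textbf{Step 3: applying Proposition~\ref{prop:window}.} Condition~\eqref{eq:low-T-pos}, together with Step 2, gives $A_\theta c_0-\varepsilon_0>\bar\varepsilon$. Since $\bar\varepsilon>0$ and $k\ge 2\Gamma_{\max}$, Proposition~\ref{prop:window} applies. Its formula for $T^*$ is decreasing in $\bar\varepsilon$, so inserting the upper bound gives
\[
T^*\ge\frac{1}{2\Gamma_{\max}}\log\!\Bigl(\frac{(A_\theta c_0-\varepsilon_0)\,e^{\beta\Delta E}}{\eta_\uparrow/\eta}\Bigr)
=\frac{1}{2\Gamma_{\max}}\Bigl[\beta\Delta E-\log\!\Bigl(\frac{\eta_\uparrow/\eta}{A_\theta c_0-\varepsilon_0}\Bigr)\Bigr].
\]
This lower bound is strictly positive precisely because of~\eqref{eq:low-T-pos}.

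\textbf{The final claim.} Suppose $\eta_\uparrow/\eta$ is bounded independently of $\beta$. Then $e^{-\beta\Delta E}\eta_\uparrow/\eta\to0$ as $\beta\to\infty$, while $A_\theta c_0-\varepsilon_0>0$ is fixed. So~\eqref{eq:low-T-pos} holds for all large $\beta$, and the bound above grows like $\beta\Delta E/(2\Gamma_{\max})$. I am assuming here that $c_0$, $\varepsilon_0$, $\Gamma_{\max}$ and $\Delta E$ do not depend on $\beta$.

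\textbf{Main obstacle.} The only delicate point is bookkeeping rather than analysis. Positivity of $\eta$ must be assumed. The monotonicity of $T^*$ in $\bar\varepsilon$ should be stated explicitly. The uniformity in $\beta$ of the remaining parameters has to be made explicit, or at least assumed, for the linear-growth statement to hold.
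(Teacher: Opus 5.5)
Your proposal is correct and follows the same route as the paper: bound $\mu$ termwise via detailed balance, pass to $\bar\varepsilon\le\mu/\eta$, and substitute into the explicit formula for $T^*$ in Proposition~\ref{prop:window}. The paper's proof is a compressed version of exactly this argument. The bookkeeping points you flag are left implicit there: $\eta>0$, the monotonicity of $T^*$ in $\bar\varepsilon$, and the $\beta$-independence of $c_0,\varepsilon_0,\Gamma_{\max},\Delta E$ needed for the linear-growth claim.
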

\begin{proof}
Detailed balance gives $\mu\le e^{-\beta\Delta E}\eta_\uparrow$,
so $\bar\varepsilon\le(\eta_\uparrow/\eta)e^{-\beta\Delta E}$.
Condition~\eqref{eq:low-T-pos} then gives $A_\theta c_0-\varepsilon_0>\bar\varepsilon$,
and Proposition~\ref{prop:window} applies, yielding $T^*>0$ with the stated lower bound.
\end{proof}

\begin{corollary}[Pure-state coherence ratio]\label{cor:pure}
Let $\rho_0=|\psi_0\rangle\langle\psi_0|$ with $|\psi_0\rangle=u+v$,
$u\in P\cH$, $v\in Q\cH$ (orthogonal).
Then $c_0=\|u\|^2\|v\|^2=(1-\varepsilon_0)\varepsilon_0$.
If $\varepsilon_0\le 1/2$, then $R(0)^2=(1-\varepsilon_0)/(2-\varepsilon_0)\ge 1/3$,
so condition~\eqref{eq:CD} holds at $t=0$ for any $\theta\le 1/\sqrt{3}$.
\end{corollary}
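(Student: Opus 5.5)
The plan is a direct computation from the block structure of a pure state, followed by an elementary monotonicity check. Write $\rho_0=|\psi_0\rangle\langle\psi_0|$ with $|\psi_0\rangle=u+v$, $u\in P\cH$, $v\in Q\cH$. Then the blocks of~\eqref{eq:block} are
\[
A=|u\rangle\langle u|,\qquad B=|u\rangle\langle v|,\qquad C=|v\rangle\langle v|.
\]
From these I would read off $\varepsilon_0=\varepsilon_Q(\rho_0)=\Tr C=\|v\|^2$, and normalization gives $\|u\|^2=1-\varepsilon_0$.

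\textbf{Coherence.} Since $B$ is rank one, its Hilbert--Schmidt norm factorizes:
\[
\|B\|_2^2=\Tr(B^*B)=\|u\|^2\,\Tr(|v\rangle\langle v|)=\|u\|^2\|v\|^2 .
\]
Hence $c_0=(1-\varepsilon_0)\varepsilon_0$. This shows the pure state saturates the Schur bound $c\le\|A\|_\infty\varepsilon_Q$.

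\textbf{Ratio.} Definition~\ref{def:activation} gives
\[
\cA(\rho_0)^2=c_0+\varepsilon_0=\varepsilon_0(2-\varepsilon_0),
\qquad
R(0)^2=\frac{c_0}{\cA(\rho_0)^2}=\frac{1-\varepsilon_0}{2-\varepsilon_0}.
\]
Here we need $\varepsilon_0>0$. If $\varepsilon_0=0$, then $\cA=0$ and the ratio is undefined, so I would note this case as excluded or trivial. The map $x\mapsto(1-x)/(2-x)=1-1/(2-x)$ is decreasing on $[0,1)$. So for $\varepsilon_0\le1/2$ we get $R(0)^2\ge(1/2)/(3/2)=1/3$.

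\textbf{Condition \eqref{eq:CD} at $t=0$.} At $t=0$ the condition reads $\varepsilon_0\le A_\theta c_0$, that is, $c_0/(c_0+\varepsilon_0)\ge\theta^2$. This is equivalent to $R(0)^2\ge\theta^2$, which follows whenever $\theta^2\le1/3$.

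No real obstacle is expected. The only care points are the factorization of $\|B\|_2^2$ for a rank-one operator, and the equivalence between $\varepsilon_0\le(\theta^{-2}-1)c_0$ and $R^2\ge\theta^2$.
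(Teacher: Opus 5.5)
Your proposal is correct and follows the paper's proof. Both compute $B=|u\rangle\langle v|$ to get $c_0=\|u\|^2\|v\|^2=(1-\varepsilon_0)\varepsilon_0$, and then evaluate $R(0)^2=c_0/(c_0+\varepsilon_0)=(1-\varepsilon_0)/(2-\varepsilon_0)$. You also make explicit three points the paper leaves implicit: the monotonicity of $x\mapsto(1-x)/(2-x)$, the equivalence of \eqref{eq:CD} at $t=0$ with $R(0)^2\ge\theta^2$, and the degenerate case $\varepsilon_0=0$.
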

\begin{proof}
$B=P\rho_0 Q=|u\rangle\langle v|$, so $c_0=\|u\|^2\|v\|^2=(1-\varepsilon_0)\varepsilon_0$.
For $\varepsilon_0\le 1/2$: $c_0+\varepsilon_0=\varepsilon_0(2-\varepsilon_0)$ and
$R(0)^2=c_0/(c_0+\varepsilon_0)=(1-\varepsilon_0)/(2-\varepsilon_0)\ge 1/3$.
\end{proof}

\section{Coherence Decomposition and Multi-Sector Bounds}\label{sec:cps}

\subsection{Two-block decomposition}

Lemma~\ref{lem:BKM} has a structural interpretation as the coherence term in an
exact two-term decomposition of relative entropy.

\begin{theorem}[Coherence--population separation]\label{thm:CPS}
Let $\sigma$ satisfy $\Pi(\sigma)=\sigma$ (block-diagonal in $P\oplus Q$),
and let $\rho$ satisfy Assumption~\ref{ass:local} and $\varepsilon_Q(\rho)\le a_0/2$.

\begin{enumerate}
\item[\textup{(i)}] \textbf{Exact decomposition:}
\begin{equation}
D(\rho\|\sigma)
= \underbrace{D(\rho\|\Pi\rho)}_{D_{\mathrm{coh}}(\rho)}
+ \underbrace{D(\Pi\rho\|\sigma)}_{D_{\mathrm{pop}}(\rho,\sigma)}.
\label{eq:CPS-decomp}
\end{equation}
Here $D_{\mathrm{coh}}(\rho)=D(\rho\|\Pi\rho)$ equals the relative entropy of
coherence~\cite{BaumgratzCramerPlenio2014} in the $P\oplus Q$ basis.
\item[\textup{(ii)}] \textbf{Coherence lower bound:}
\begin{equation}
D_{\mathrm{coh}}(\rho)\;\ge\; c(\rho)\,\log\!\Bigl(\frac{a_0}{\varepsilon_Q(\rho)}\Bigr).
\label{eq:CPS-coh}
\end{equation}
\item[\textup{(iii)}] \textbf{Combined lower bound:}
\begin{equation}
D(\rho\|\sigma)\;\ge\;
c(\rho)\,\log\!\Bigl(\frac{a_0}{\varepsilon_Q(\rho)}\Bigr)+D(\Pi\rho\|\sigma).
\label{eq:CPS-main}
\end{equation}
\end{enumerate}
\end{theorem}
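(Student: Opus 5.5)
The plan is to obtain all three parts from material already in the proof of Lemma~\ref{lem:BKM}. The new feature is that $\sigma$ is now an arbitrary block-diagonal state rather than $\sigma_\varepsilon$, so I will check that Step~1 of that proof used only block-diagonality of the reference. The main care point is the support issue when $\sigma$ is not faithful. I will adopt the convention that both sides of \eqref{eq:CPS-decomp} equal $+\infty$ when $\supp\rho\not\subseteq\supp\sigma$.

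\textbf{Part (i).} First show $D(\rho\|\Pi\rho)<\infty$ always. This holds because $\supp\rho\subseteq\supp\Pi\rho$: if $\Pi\rho\,v=0$, then $\langle Pv|\rho|Pv\rangle=\langle Qv|\rho|Qv\rangle=0$, so $\rho Pv=\rho Qv=0$ and hence $\rho v=0$.

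Next, when $\supp\rho\subseteq\supp\sigma$, observe that $\supp\Pi\rho\subseteq\supp\sigma$ as well. The reason is that $\supp\sigma$ is invariant under $P$ and $Q$ (since $\sigma$ commutes with $P$), so compressing $\rho$ by $P$ and $Q$ keeps its support inside $\supp\sigma$.

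Then $\log\Pi\rho$ and $\log\sigma$ (taken on their supports) are block-diagonal. The identity $\Tr(\rho X)=\Tr(\Pi\rho\,X)$ for block-diagonal $X$ gives
\[
D(\rho\|\sigma)=\Tr\rho\log\rho-\Tr\Pi\rho\log\Pi\rho+\Tr\Pi\rho\log\Pi\rho-\Tr\Pi\rho\log\sigma,
\]
which is exactly \eqref{eq:CPS-decomp}.

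For the converse case, note that if $\supp\rho\not\subseteq\supp\sigma$, then $\supp\Pi\rho\not\subseteq\supp\sigma$. This follows since $\supp\rho\subseteq\supp\Pi\rho$. So both sides are $+\infty$.

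The identification with the relative entropy of coherence is then definitional. Here $\Pi$ plays the role of the dephasing map for the block decomposition, and $D(\rho\|\Pi\rho)=\min_{\tau=\Pi\tau}D(\rho\|\tau)$. This follows from (i) applied with $\sigma=\tau$, because $D(\Pi\rho\|\tau)\ge0$.

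\textbf{Part (ii).} This is precisely \eqref{eq:BKM-Picoh}, established in Steps~2--3 of the proof of Lemma~\ref{lem:BKM}. Those steps used only $P\rho P\ge a_0P$ and $\varepsilon_Q(\rho)\le a_0/2$, and never the reference state. I will reread them to confirm that no property of $\sigma_\varepsilon$ entered.

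One subtlety needs checking. Lemma~\ref{lem:2x2-BKM} and Corollary~\ref{cor:2x2-entropy} require $D_0+tY>0$ for $t<1$, where $D_0=\diag(a_j,c_j)$ with $a_j,c_j>0$ and $s_j^2\le a_jc_j$. Indeed $\det(D_0+tY)=a_jc_j-t^2s_j^2>0$ for $t<1$, so this holds. The degenerate blocks with $c_j=0$ were already handled in Step~3. If $\varepsilon_Q(\rho)=0$, I use the stated convention that both sides vanish.

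\textbf{Part (iii).} Substitute (ii) into (i).

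I expect no real obstacle. The only delicate point is the infinite-entropy bookkeeping in (i) for non-faithful $\sigma$, which the support argument above handles.
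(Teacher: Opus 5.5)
Your proposal is correct and follows essentially the same route as the paper. Part (i) uses the trace identity $\Tr(\rho X)=\Tr(\Pi\rho\,X)$ for block-diagonal $X$, which is the Pythagorean identity; part (ii) is exactly \eqref{eq:BKM-Picoh} from Steps~2--3 of Lemma~\ref{lem:BKM}; part (iii) combines the two. Your additional work is a genuine refinement of details the paper leaves implicit: the support bookkeeping for non-faithful block-diagonal $\sigma$ (including $\supp\rho\subseteq\supp\Pi\rho$), and the check that the positivity hypotheses of Lemma~\ref{lem:2x2-BKM} hold on each $2\times2$ block.
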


\begin{proof}
\textbf{Part~(i)} is the quantum Pythagorean theorem~\cite{Petz1986}
for the block-diagonal conditional expectation $\Pi$ and $\sigma$ in its fixed-point algebra:
$D(\rho\|\Pi\rho)+D(\Pi\rho\|\sigma)=\Tr[\rho\log\rho]-\Tr[\Pi\rho\log\sigma]=D(\rho\|\sigma)$,
using $\Tr[\rho X]=\Tr[\Pi\rho\cdot X]$ for block-diagonal $X$.
\textbf{Part~(ii)} is exactly~\eqref{eq:BKM-Picoh} from the proof of Lemma~\ref{lem:BKM}.
That equation bounds $D(\rho\|\Pi\rho)$ directly, without invoking the outer
reference $\sigma_\varepsilon$; it follows from Steps~2--3 of that proof alone.
\textbf{Part~(iii)} combines Part~(i), Part~(ii), and $D_{\rm pop}=D(\Pi\rho\|\sigma)\ge 0$.
\end{proof}

\begin{remark}
The decomposition~\eqref{eq:CPS-decomp} is standard~\cite{Petz1986},
while the lower bound~\eqref{eq:CPS-coh} gives an explicit support-sensitive
estimate for the coherence contribution.
\end{remark}

\subsection{Ordered multi-sector extension}\label{sec:multi-sector}

\begin{lemma}[Off-diagonal block under sequential pinching]\label{lem:block-preservation}
Let $P_1,\ldots,P_K$ be mutually orthogonal projections with $\sum_k P_k=\Id$.
Define $\rho^{(K)}:=\rho$ and $\rho^{(m-1)}:=\mathcal E_m(\rho^{(m)})$
where $\mathcal E_m(X):=P_mXP_m+(I-P_m)X(I-P_m)$.
Then for each $m=2,\ldots,K$:
\begin{equation}
(I-P_m)\,\rho^{(m)}\,P_m = \sum_{i=1}^{m-1}P_i\,\rho\,P_m.
\label{eq:block-id}
\end{equation}
Consequently, $C_m:=\|(I-P_m)\rho^{(m)}P_m\|_2^2=\sum_{i<m}\|P_i\rho P_m\|_2^2$.
\end{lemma}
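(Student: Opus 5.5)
We argue by downward induction on $m$, tracking which off-diagonal blocks $P_i\rho^{(m)}P_j$ survive the successive pinchings. The claim to establish is the following block description of $\rho^{(m)}$ for every $m=K,K-1,\ldots,1$:
\[
P_i\rho^{(m)}P_j=
\begin{cases}
P_i\rho P_j, & i=j,\ \text{or } i\neq j \text{ with } \max(i,j)\le m,\\
0, & i\neq j \text{ with } \max(i,j)>m.
\end{cases}
\]
The identity~\eqref{eq:block-id} then follows directly from this description.

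\emph{Base case.} For $m=K$ we have $\rho^{(K)}=\rho$. The second case of the description is then vacuous, so there is nothing to check.

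\emph{Induction step.} Assume the description holds for $\rho^{(m)}$. Since $I-P_m=\sum_{k\neq m}P_k$, expanding gives
\[
\mathcal E_m(X)=P_mXP_m+\sum_{i,j\neq m}P_iXP_j .
\]
So $\mathcal E_m$ leaves every block $P_iXP_j$ with $i,j\neq m$ unchanged. It also keeps the diagonal block $P_mXP_m$. It annihilates exactly the blocks $P_iXP_m$ and $P_mXP_j$ with $i,j\neq m$; this follows from orthogonality $P_kP_m=\delta_{km}P_m$.

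Applying this to $X=\rho^{(m)}$, the off-diagonal blocks with $\max(i,j)=m$ are removed. The blocks with $\max(i,j)>m$ were already zero and stay zero. The diagonal blocks, and the off-diagonal blocks with $\max(i,j)\le m-1$, are unchanged and hence still equal to the corresponding blocks of $\rho$. This is precisely the description at level $m-1$.

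\emph{Deriving~\eqref{eq:block-id}.} Fix $m\ge2$ and expand
\[
(I-P_m)\rho^{(m)}P_m=\sum_{i\neq m}P_i\rho^{(m)}P_m .
\]
\begin{itemize}
\item For $i>m$ we have $\max(i,m)=i>m$, so the term vanishes by the description.
\item For $i<m$ we have $\max(i,m)=m\le m$, so the term equals $P_i\rho P_m$.
\end{itemize}
This gives $(I-P_m)\rho^{(m)}P_m=\sum_{i<m}P_i\rho P_m$, which is~\eqref{eq:block-id}.

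\emph{The norm identity.} The operators $P_i\rho P_m$ for distinct $i$ are Hilbert--Schmidt orthogonal. Indeed, for $i\neq i'$,
\[
\Tr\bigl((P_i\rho P_m)^*P_{i'}\rho P_m\bigr)=\Tr\bigl(P_m\rho P_iP_{i'}\rho P_m\bigr)=0 .
\]
Pythagoras in the Hilbert--Schmidt inner product then gives
\[
C_m=\bigl\|(I-P_m)\rho^{(m)}P_m\bigr\|_2^2=\sum_{i<m}\|P_i\rho P_m\|_2^2 .
\]

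The only point requiring care is the bookkeeping of which blocks each pinching kills, in particular that blocks with $\max(i,j)>m$ were already zero and are not reintroduced by later steps. This is immediate because each $\mathcal E_m$ only multiplies blocks by $0$ or $1$.
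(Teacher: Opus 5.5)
Your proof is correct and follows essentially the same route as the paper. Both arguments treat each $\mathcal E_j$ as a block multiplier that annihilates exactly the off-diagonal blocks with an index equal to $j$ and preserves all others, so only the blocks $P_i\rho P_m$ with $i<m$ survive in $(I-P_m)\rho^{(m)}P_m$, and both get the norm identity from Hilbert--Schmidt orthogonality via $P_iP_{i'}=0$. Your downward induction with a full block description of $\rho^{(m)}$ is a more formal version of the paper's direct bookkeeping, and it handles the diagonal blocks $P_jXP_j$ (kept by $\mathcal E_j$) more carefully than the paper's phrasing does.
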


\begin{proof}
The channel $\mathcal E_j$ preserves $P_i X P_{m'}$ whenever $j\notin\{i,m'\}$
(both $P_i$ and $P_{m'}$ lie in the range of $I-P_j$), and zeros out $P_i X P_{m'}$
if $j\in\{i,m'\}$.
After applying $\mathcal E_K,\ldots,\mathcal E_{m+1}$:
every block $P_j\rho P_m$ with $j>m$ is killed at step $\mathcal E_j$,
while every block $P_i\rho P_m$ with $i<m$ survives
(no $\mathcal E_\ell$ with $\ell>m$ has $\ell\in\{i,m\}$ for $i<m<\ell$).
The norm identity follows from $P_iP_j=0$ for $i\ne j$.
\end{proof}

\begin{theorem}[Ordered multi-sector bound]\label{thm:multi-block}
Under the notation of Lemma~\ref{lem:block-preservation}, let $\Pi_K(\sigma)=\sigma$.
For each $m=2,\ldots,K$, assume:
\begin{align}
a_m &:= \lambda_{\min}\!\bigl((I-P_m)\rho^{(m)}(I-P_m)\big|_{\mathrm{ran}(I-P_m)}\bigr) > 0,
\label{eq:am-cond}\\
\varepsilon_m &:= \Tr(P_m\rho P_m) \le a_m/2.
\label{eq:epsm-cond}
\end{align}
These conditions ensure that each sector satisfies a local support and small-population constraint compatible with the two-block coercivity estimate.
\begin{equation}
D(\rho\|\sigma)
\ge\sum_{m=2}^K C_m\,\log\!\Bigl(\frac{a_m}{\varepsilon_m}\Bigr)
+ D(\Pi_K\rho\|\sigma).
\label{eq:multi-block}
\end{equation}
\end{theorem}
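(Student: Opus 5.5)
The plan is to telescope the Pythagorean identity of Theorem~\ref{thm:CPS}(i) along the chain of two-block pinchings $\mathcal E_K,\dots,\mathcal E_2$. Then, for each step, I apply the coherence coercivity bound~\eqref{eq:BKM-Picoh} with the roles $P\leftrightarrow I-P_m$ and $Q\leftrightarrow P_m$. If $D(\rho\|\sigma)=+\infty$ there is nothing to prove, so I assume it is finite throughout.

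\textbf{Step 1 (telescoping).} Since $\Pi_K(\sigma)=\sigma$, the state $\sigma$ is block diagonal with respect to every $P_k$. Hence it is also block diagonal with respect to each two-block decomposition $(I-P_m)\oplus P_m$, i.e.\ $\mathcal E_m(\sigma)=\sigma$. Applying Theorem~\ref{thm:CPS}(i) to the state $\rho^{(m)}$, with $P$ replaced by $I-P_m$ and $\Pi$ by $\mathcal E_m$, gives
\[
D(\rho^{(m)}\|\sigma)=D(\rho^{(m)}\|\rho^{(m-1)})+D(\rho^{(m-1)}\|\sigma),
\]
valid because $\Tr(\rho^{(m)}X)=\Tr(\rho^{(m-1)}X)$ for every $\mathcal E_m$-block-diagonal $X$. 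This includes $X=\log\sigma$ and $X=\log\rho^{(m-1)}$; finiteness of each term follows from the finiteness of the left side together with nonnegativity. Summing from $m=K$ down to $m=2$ yields
\[
D(\rho\|\sigma)=\sum_{m=2}^K D(\rho^{(m)}\|\mathcal E_m\rho^{(m)})+D(\rho^{(1)}\|\sigma).
\]
I then check that $\rho^{(1)}=\Pi_K\rho$. By the block bookkeeping in the proof of Lemma~\ref{lem:block-preservation}, every off-diagonal block $P_i\rho P_j$ with $i\neq j$ is killed at step $\mathcal E_{\max(i,j)}$, where $\max(i,j)\ge2$. The diagonal blocks $P_i\rho P_i$ are never altered, since $\mathcal E_m$ preserves both $P_mXP_m$ and $(I-P_m)X(I-P_m)$.

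\textbf{Step 2 (per-step coercivity).} For fixed $m$, I apply~\eqref{eq:BKM-Picoh} to $\rho^{(m)}$ with respect to the decomposition $(I-P_m)\oplus P_m$. The proof of that bound (Steps 2--3 of Lemma~\ref{lem:BKM}) uses only three facts, each of which I verify here.
\begin{enumerate}
\item[(a)] A lower bound $a_0$ on the compression to the "support" block. Here this is $(I-P_m)\rho^{(m)}(I-P_m)\ge a_m(I-P_m)$, which holds by the definition~\eqref{eq:am-cond} of $a_m$.
\item[(b)] The "kernel" population and its smallness. The population is $\Tr(P_m\rho^{(m)}P_m)=\Tr(P_m\rho P_m)=\varepsilon_m$, because the diagonal block $P_m\rho P_m$ is untouched by $\mathcal E_K,\dots,\mathcal E_{m+1}$. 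By~\eqref{eq:epsm-cond}, $\varepsilon_m\le a_m/2$.
\item[(c)] The off-diagonal weight. This is $\|(I-P_m)\rho^{(m)}P_m\|_2^2=C_m$ by Lemma~\ref{lem:block-preservation}.
\end{enumerate}
Consequently $D(\rho^{(m)}\|\mathcal E_m\rho^{(m)})\ge C_m\log(a_m/\varepsilon_m)$, with the convention that the term vanishes when $\varepsilon_m=0$, since positivity then forces $C_m=0$. Substituting into the telescoped identity gives~\eqref{eq:multi-block}.

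\textbf{Main obstacle.} I expect the delicate point to be confirming that~\eqref{eq:BKM-Picoh} transfers verbatim to the relabelled decomposition applied to the intermediate state $\rho^{(m)}$ rather than to $\rho$. In particular, the SVD pinching in Step 2 and the bounds $a_j\ge a_m$ and $c_j\le\varepsilon_m\le a_m/2$ in Step 3 of that proof must be recomputed for $\rho^{(m)}$. These bounds only use positivity, the block lower bound, the unit-trace bound $\|\rho^{(m)}\|_{\rm op}\le1$, and the diagonal-entry bound $c_j\le\Tr(P_m\rho^{(m)}P_m)$. All of these survive the pinchings, since each $\mathcal E_\ell$ is unital and trace preserving and $\rho^{(m)}$ is again a density matrix. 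So I expect this check to go through with only bookkeeping.
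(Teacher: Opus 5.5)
Your proposal is correct and follows essentially the paper's route. You telescope the Pythagorean identity along $\mathcal E_K,\dots,\mathcal E_2$, then apply Steps~2--3 of Lemma~\ref{lem:BKM} to $\rho^{(m)}$ with $a_0\leftarrow a_m$ and $\varepsilon_Q\leftarrow\varepsilon_m$, using Lemma~\ref{lem:block-preservation} to identify the off-diagonal weight as $C_m$. The only difference is cosmetic: you telescope relative to $\sigma$ and then check $\rho^{(1)}=\Pi_K\rho$, whereas the paper telescopes relative to $\Pi_K\rho$ and adds $D(\Pi_K\rho\|\sigma)$ at the end. Along the way you make explicit the facts $\Tr(P_m\rho^{(m)}P_m)=\varepsilon_m$ and $\rho^{(1)}=\Pi_K\rho$, which the paper uses without comment.
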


\begin{proof}
\textbf{Telescoping.}
$\Pi_K\rho$ is fixed by each $\mathcal E_m$,
so the quantum Pythagorean theorem gives
$D(\rho^{(m)}\|\Pi_K\rho)=D(\rho^{(m)}\|\rho^{(m-1)})+D(\rho^{(m-1)}\|\Pi_K\rho)$.
Telescoping: $D(\rho\|\Pi_K\rho)=\sum_{m=2}^K D(\rho^{(m)}\|\rho^{(m-1)})$.
\textbf{Per-step bound.}
At step $m$, note that $\rho^{(m-1)}=\mathcal E_m(\rho^{(m)})=\Pi_{P_m}(\rho^{(m)})$
is the block-diagonal pinching of $\rho^{(m)}$ in the $(I-P_m)\oplus P_m$ split,
so $D(\rho^{(m)}\|\rho^{(m-1)})=D(\rho^{(m)}\|\Pi_{P_m}(\rho^{(m)}))$.
This is the relative entropy of coherence of $\rho^{(m)}$ in the $(I-P_m)\oplus P_m$ basis.
The off-diagonal block of $\rho^{(m)}$ in this split is $\sum_{i<m}P_i\rho P_m$
with squared Frobenius norm $C_m$ (Lemma~\ref{lem:block-preservation}).
The proof of Lemma~\ref{lem:BKM} (Steps 2--3, which do not involve the outer reference
$\sigma_\varepsilon$) applies verbatim with $a_0\leftarrow a_m$ and $\varepsilon_Q\leftarrow\varepsilon_m$:
data-processing to the SVD-adapted $2\times2$ blocks gives
$D(\rho^{(m)}\|\rho^{(m-1)})\ge C_m\log(a_m/\varepsilon_m)$
under conditions~\eqref{eq:am-cond}--\eqref{eq:epsm-cond}.
Adding $D(\Pi_K\rho\|\sigma)$ via the Pythagorean theorem yields~\eqref{eq:multi-block}.
\end{proof}

\begin{remark}[On the local conditions]\label{rem:local-conds}
Conditions~\eqref{eq:am-cond}--\eqref{eq:epsm-cond} require the support block
$(I-P_m)\rho^{(m)}(I-P_m)$ to be uniformly invertible and sector $m$ to carry less
than half the minimum weight of its complement.
These are genuine hypotheses; they are not automatic for $K\ge 3$.
In the two-sector case ($K=2$) they reduce to Assumption~\ref{ass:local} and $\varepsilon_Q\le a_0/2$.
For $K\ge 3$, a sufficient condition is a strict population hierarchy: if the sectors are
ordered so that $\Tr(P_k\rho P_k)$ decreases geometrically in $k$, the conditions
hold with constants determined by the geometric ratio.
\end{remark}

\subsection{Comparison with the Fawzi--Renner bound}\label{sec:FR-comparison}

We compare the coherence lower bound~\eqref{eq:CPS-coh} with the
Fawzi--Renner recoverability bound for the pinching channel.

\begin{lemma}[Petz recovery fixes the dephased state]\label{lem:petz-fixes}
Let $\sigma$ be a faithful (invertible) density matrix satisfying $\Pi(\sigma)=\sigma$.
The Petz recovery map for the pinching channel $\Pi$ with reference $\sigma$ is
\[
\mathcal R_{\sigma,\Pi}(X):=\sigma^{1/2}\Pi(\sigma^{-1/2}X\sigma^{-1/2})\sigma^{1/2},
\]
which is well-defined since $\sigma$ is invertible.
One has $\mathcal R_{\sigma,\Pi}(\Pi\rho)=\Pi\rho$,
and consequently $-2\log F(\rho,\mathcal R_{\sigma,\Pi}(\Pi\rho))=-2\log F(\rho,\Pi\rho)$.
\end{lemma}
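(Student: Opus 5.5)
The plan is to verify the fixed-point identity by direct computation, using only that $\sigma$ and $\Pi\rho$ are block-diagonal with respect to $P\oplus Q$ and that $\Pi$ is a projection onto block-diagonal operators. First I would record that $\sigma$ is invertible and satisfies $\Pi(\sigma)=\sigma$, so $\sigma=P\sigma P+Q\sigma Q$ with each block invertible on its range. Consequently every continuous function of $\sigma$, in particular $\sigma^{1/2}$ and $\sigma^{-1/2}$, is also block-diagonal. This holds because functional calculus respects the direct sum: $f(\sigma)=f(P\sigma P|_{P\cH})\oplus f(Q\sigma Q|_{Q\cH})$. This also settles well-definedness, since $\sigma^{-1/2}$ exists by faithfulness.

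Next I would use the bimodule property of the pinching. For block-diagonal $Z_1,Z_2$ and arbitrary $X$, one has $\Pi(Z_1XZ_2)=Z_1\Pi(X)Z_2$. To see this, write $\Pi(X)=PXP+QXQ$ and use that $Z_i$ commutes with $P$ and $Q$. Applying this with $X=\Pi\rho$, $Z_1=Z_2=\sigma^{-1/2}$ gives
\[
\Pi(\sigma^{-1/2}\Pi\rho\,\sigma^{-1/2})=\sigma^{-1/2}\Pi(\Pi\rho)\sigma^{-1/2}=\sigma^{-1/2}\Pi\rho\,\sigma^{-1/2},
\]
where the last step uses idempotence $\Pi^2=\Pi$. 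Alternatively, one can simply note that $\sigma^{-1/2}\Pi\rho\,\sigma^{-1/2}$ is already a product of block-diagonal operators, hence block-diagonal and fixed by $\Pi$. Conjugating by $\sigma^{1/2}$ then yields $\mathcal R_{\sigma,\Pi}(\Pi\rho)=\sigma^{1/2}\sigma^{-1/2}\Pi\rho\,\sigma^{-1/2}\sigma^{1/2}=\Pi\rho$.

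The fidelity statement follows by substitution: $F(\rho,\mathcal R_{\sigma,\Pi}(\Pi\rho))=F(\rho,\Pi\rho)$, so the two logarithmic expressions coincide. This holds whichever convention for $F$ (root fidelity or squared) the paper uses later, since only equality of the arguments is needed.

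I do not expect a genuine obstacle; the only point requiring care is the claim that $\sigma^{\pm1/2}$ is block-diagonal, which I would justify explicitly via the spectral decomposition restricted to $P\cH$ and $Q\cH$, both invariant under $\sigma$.
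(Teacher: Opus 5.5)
Your proposal is correct and takes essentially the same route as the paper. In both, block-diagonality of $\sigma^{-1/2}$ and $\Pi\rho$ makes $Y=\sigma^{-1/2}(\Pi\rho)\sigma^{-1/2}$ block-diagonal, so $\Pi(Y)=Y$, and conjugating by $\sigma^{1/2}$ returns $\Pi\rho$; your functional-calculus justification that $\sigma^{\pm1/2}$ is block-diagonal spells out a step the paper only asserts.
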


\begin{proof}
Since $\sigma$ is faithful, $\sigma^{-1/2}$ is a bounded operator.
Both $\sigma^{-1/2}$ and $\Pi\rho$ are block-diagonal in the same $P\oplus Q$ decomposition,
so $Y:=\sigma^{-1/2}(\Pi\rho)\sigma^{-1/2}$ is block-diagonal, hence $\Pi(Y)=Y$, and
$\mathcal R_{\sigma,\Pi}(\Pi\rho)=\sigma^{1/2}Y\sigma^{1/2}=\Pi\rho$.
\end{proof}

\begin{theorem}[Near-boundary comparison with Fawzi--Renner]\label{thm:FR-comparison}
Let $\sigma$ be a faithful block-diagonal density matrix satisfying $\Pi(\sigma)=\sigma$,
and let $\rho$ satisfy:
\begin{enumerate}
\item[\textup{(i)}] $P\rho P\ge a_0 P$,
\item[\textup{(ii)}] $\varepsilon_Q(\rho)\le a_0\,e^{-4/a_0}$ (equivalently, $\log(a_0/\varepsilon_Q)\ge 4/a_0$).
\end{enumerate}
Then
\begin{equation}
c(\rho)\,\log\!\Bigl(\frac{a_0}{\varepsilon_Q(\rho)}\Bigr)
\;\ge\;
-2\log F\!\bigl(\rho,\,\mathcal R_{\sigma,\Pi}(\Pi\rho)\bigr).
\label{eq:FR-comparison}
\end{equation}
Moreover, for any family $\{\rho_s\}_{s>0}$ with $c(\rho_s)>0$ and $\varepsilon_Q(\rho_s)\to 0$
satisfying~\textup{(i)}-\textup{(ii)}, the ratio satisfies
\begin{equation}
\frac{c(\rho_s)\log(a_0/\varepsilon_Q(\rho_s))}{-2\log F(\rho_s,\mathcal R_{\sigma,\Pi}(\Pi\rho_s))}
\;\ge\;
\frac{a_0\,\log(a_0/\varepsilon_Q(\rho_s))}{4}
\;\xrightarrow{s\to 0}\;\infty.
\label{eq:ratio-diverges}
\end{equation}
\end{theorem}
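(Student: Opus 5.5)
The plan is to reduce \eqref{eq:FR-comparison} to a \emph{quadratic} upper bound on the fidelity deficit,
\begin{equation*}
-2\log F(\rho,\Pi\rho)\;\le\;\frac{4}{a_0}\,c(\rho),
\end{equation*}
and then combine it with hypothesis (ii).

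\textbf{Step 1.} Since $\sigma$ is faithful and block-diagonal, Lemma~\ref{lem:petz-fixes} gives $\mathcal R_{\sigma,\Pi}(\Pi\rho)=\Pi\rho$. Hence the right-hand side of \eqref{eq:FR-comparison} is $-2\log F(\rho,\Pi\rho)$, and $\sigma$ plays no further role.

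\textbf{Step 2 (conclusion, assuming the displayed bound).} Hypothesis (ii) says $\log(a_0/\varepsilon_Q)\ge 4/a_0$. Therefore
\[
c\log(a_0/\varepsilon_Q)\;\ge\;\tfrac{4}{a_0}\,c\;\ge\;-2\log F(\rho,\Pi\rho),
\]
which is \eqref{eq:FR-comparison}. Dividing $c\log(a_0/\varepsilon_Q)$ by the same upper bound $4c/a_0$, which is legitimate since $c(\rho_s)>0$, gives the ratio bound $a_0\log(a_0/\varepsilon_Q(\rho_s))/4$ in \eqref{eq:ratio-diverges}. This diverges as $\varepsilon_Q(\rho_s)\to0$ because $a_0$ is fixed.

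\textbf{Step 3 (main obstacle: the quadratic fidelity bound).} Naive routes fail here.
\begin{itemize}
\item Fuchs--van de Graaf gives $1-F\le\tfrac12\|B+B^*\|_1$. This is linear in $\|B\|$, i.e.\ of order $\sqrt c$, which is too weak.
\item Comparing with $D(\rho\|\Pi\rho)$ goes in the wrong direction.
\item A $\chi^2$-type bound involves $C^{-1}$, which blows up near the boundary.
\end{itemize}

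I would instead use
\[
F(\rho,\Pi\rho)=\|\sqrt\rho\sqrt{\Pi\rho}\|_1\;\ge\;\Tr(\sqrt\rho\sqrt{\Pi\rho})\;=\;1-\tfrac12\|\sqrt\rho-\sqrt{\Pi\rho}\|_2^2.
\]
The remaining task is to show $\|\sqrt\rho-\sqrt{\Pi\rho}\|_2^2\le 2c/a_0$, i.e.\ $1-F\le c/a_0$.

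The perturbation $\rho-\Pi\rho=B+B^*$ only connects the $P$-block, where $A\ge a_0P$, to the $Q$-block. I would write $\sqrt X=\pi^{-1}\int_0^\infty \lambda^{-1/2}\,X(X+\lambda)^{-1}\,d\lambda$ and expand to first order. The relevant divided differences for an eigenvalue $a$ of $A$ and $\gamma$ of $C$ are $(\sqrt a-\sqrt\gamma)/(a-\gamma)=1/(\sqrt a+\sqrt\gamma)\le 1/\sqrt{a_0}$. This yields the leading term $2\|B\|_2^2/a_0$.

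To make this rigorous, and not merely first order, I would first reduce to the SVD-adapted $2\times2$ blocks of Step~2 of Lemma~\ref{lem:BKM}. Fidelity is not monotone in the helpful direction under $\Phi$, so I would instead exploit the block structure directly. Alternatively, I would use an operator-Lipschitz estimate for $\sqrt{\cdot}$ on pairs of positive operators whose difference is off-diagonal between a block bounded below by $a_0$ and an arbitrary block. This step is where I expect the difficulty.

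\textbf{Step 4 (from $1-F$ to $-2\log F$).} Once $1-F\le c/a_0$ is in hand, observe that (ii) forces $\varepsilon_Q$, and hence $c\le\varepsilon_Q$ (since $\|A\|_\infty\le1$), to be tiny relative to $a_0$. Thus $F\ge1/2$. Then $-2\log F\le 2(1-F)/F\le 4c/a_0$, as required.
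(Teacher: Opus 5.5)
Your Steps 1, 2 and 4 are fine. They correctly reduce the theorem to the single estimate $1-F(\rho,\Pi\rho)\le c(\rho)/a_0$, together with $F\ge 1/2$, which you get from $c\le\varepsilon_Q\le a_0e^{-4/a_0}$. You are also right that the SVD pinching $\Phi$ cannot give a lower bound, since data processing gives $F(\Phi\rho,\Phi\Pi\rho)\ge F(\rho,\Pi\rho)$. The paper's own argument uses this inequality in the reversed direction, so there is nothing to borrow from it there.

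The gap is Step 3, which is the entire content of the theorem, and the intermediate inequality you set out to prove is false. Take $\dim P\cH=\dim Q\cH=1$ and $\rho=|\psi\rangle\langle\psi|$ with $\psi=(\sqrt a,\sqrt\gamma)$, $a=0.99$, $\gamma=0.01$, $a_0=a$. Then (i) holds, and (ii) holds since $0.01\le 0.99\,e^{-4/0.99}\approx 0.0174$. Because $\sqrt\rho=\rho$,
\[
\normtwo{\sqrt\rho-\sqrt{\Pi\rho}}^2=2-2\bigl(a^{3/2}+\gamma^{3/2}\bigr)\approx 0.0279>0.02=2c(\rho)/a_0 .
\]
Asymptotically the left side is $\approx 3\gamma$ against $2\gamma$. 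The first-order divided-difference heuristic fails exactly in the regime of interest. There $\|B\|\sim\sqrt{\varepsilon_Q}$ is much larger than the small eigenvalues of $C$, where $\sqrt{\cdot}$ is not Lipschitz, so no operator-Lipschitz estimate can produce that constant. The affinity is also genuinely lossier than $F$ here. Since $F=\sqrt{1-2a\gamma}$, one has $1-F\approx\gamma\le c/a_0$, while $1-\Tr\sqrt\rho\sqrt{\Pi\rho}\approx\tfrac32\gamma$. Your ``i.e.'' therefore treats a false inequality as equivalent to a true one.

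The missing idea is to bound $F$ from below through Uhlmann's theorem, using a block-triangular (Schur-complement) factorization instead of square roots. For any $X,Y$ with $XX^*=\rho$ and $YY^*=\Pi\rho$ one has $F(\rho,\Pi\rho)\ge|\Tr X^*Y|$. Set $S:=C-B^*A^{-1}B\ge0$; this is the Schur complement, and $A$ is invertible on $P\cH$ by (i). Take
\[
X=\begin{pmatrix}A^{1/2}&0\\ B^*A^{-1/2}&S^{1/2}\end{pmatrix},
\qquad
Y=\begin{pmatrix}A^{1/2}&0\\ 0&C^{1/2}\end{pmatrix}.
\]
Then
\[
\Tr X^*Y=\Tr A+\Tr(S^{1/2}C^{1/2})\ge \Tr A+\Tr S=1-\Tr(B^*A^{-1}B)\ge 1-c(\rho)/a_0 .
\]
The first inequality uses $S^{1/2}\le C^{1/2}$, by operator monotonicity of the square root, so $\Tr(S^{1/4}C^{1/2}S^{1/4})\ge\Tr S$. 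The last uses $A^{-1}\le a_0^{-1}P$. This is exactly the bound your Step 4 needs, and it is non-perturbative with no $2\times2$ reduction.

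Finally, for the ratio you also need $-2\log F>0$ to divide. This holds because $c(\rho_s)>0$ forces $\rho_s\ne\Pi\rho_s$, hence $F<1$.
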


\begin{proof}
By Lemma~\ref{lem:petz-fixes}, the right side of~\eqref{eq:FR-comparison}
equals $-2\log F(\rho,\Pi\rho)$.

\textbf{Upper bound on $-2\log F(\rho,\Pi\rho)$ via the SVD pinching.}
Recall the SVD pinching channel $\Phi$ from the proof of Lemma~\ref{lem:BKM}:
it maps $\rho$ to a direct sum over the subspaces $K_j=\operatorname{span}\{u_j,v_j\}$
and $K_\perp$.
These subspaces are mutually orthogonal: the $\{u_j\}$ are orthonormal in $P\cH$
and the $\{v_j\}$ are orthonormal in $Q\cH$ (both from the SVD of $B=P\rho Q$),
so $K_j\perp K_{j'}$ for $j\ne j'$ and $K_j\perp K_\perp$ by definition.
Fidelity is additive across orthogonal direct sums:
\[
F(\bigoplus_j X_j,\bigoplus_j Y_j)
=
\sum_j F(X_j,Y_j).
\]
By data processing for $\Phi$:
\begin{equation}
F(\rho,\Pi\rho)\ge F(\Phi(\rho),\Phi(\Pi\rho))
= \sum_j F(M_j,D_j) + F(M_\perp,D_\perp).
\label{eq:F-additive}
\end{equation}
Since $M_\perp=D_\perp$ (the perpendicular block is unchanged by $\Phi$),
$F(M_\perp,D_\perp)=\Tr(M_\perp)=1-\sum_j(a_j+c_j)$.

For each $2\times 2$ block, writing $t_j:=a_j+c_j$
and $M_j^{\rm n}:=M_j/t_j$ (normalized), the scaling property of fidelity gives
$F(M_j,D_j)=t_j F(M_j^{\rm n},D_j^{\rm n})$.
From the exact formula $F^2(M_j^{\rm n},D_j^{\rm n})
=1-2(a_j/t_j)(c_j/t_j)(1-\sqrt{1-(s_j/t_j)^2/(a_j/t_j)(c_j/t_j)})$
and the inequality $1-\sqrt{1-x}\le x$ ($x\in[0,1]$), one has
$1-F^2(M_j^{\rm n},D_j^{\rm n})\le 2(s_j/t_j)^2$, hence
$F(M_j^{\rm n},D_j^{\rm n})\ge\sqrt{1-2(s_j/t_j)^2}\ge 1-(s_j/t_j)^2$,
so $F(M_j,D_j)\ge t_j-s_j^2/t_j$.

Substituting into~\eqref{eq:F-additive}:
\begin{equation}
F(\rho,\Pi\rho)
\ge \sum_j\!(t_j - s_j^2/t_j) + (1-\sum_j t_j)
= 1 - \sum_j s_j^2/t_j.
\label{eq:F-lower}
\end{equation}
Using $t_j=a_j+c_j\ge a_j\ge a_0$ (from condition~(i)):
$\sum_j s_j^2/t_j\le\sum_j s_j^2/a_0=c(\rho)/a_0$.
Thus $F(\rho,\Pi\rho)\ge 1-c(\rho)/a_0$.

Since $c(\rho)\le\varepsilon_Q(\rho)\le a_0 e^{-4/a_0}$ (the first inequality
holds because $c(\rho)=\sum_j s_j^2$ and $s_j^2\le a_jc_j$, hence
$\sum_j s_j^2\le\sum_j c_j\le\Tr(C)=\varepsilon_Q$, using $a_j\le 1$),
we have $c(\rho)/a_0\le e^{-4/a_0}\le 1/2$.
Hence, using $-\log(1-u)\le u/(1-u)\le 2u$ for $u\le 1/2$:
\begin{equation}
-2\log F(\rho,\Pi\rho)\le 4c(\rho)/a_0.
\label{eq:FR-upper}
\end{equation}

\textbf{Our lower bound.}
Condition~(ii) gives $\log(a_0/\varepsilon_Q)\ge 4/a_0$, so
$c(\rho)\log(a_0/\varepsilon_Q)\ge 4c(\rho)/a_0\ge -2\log F(\rho,\Pi\rho)$,
establishing~\eqref{eq:FR-comparison}.
For the ratio~\eqref{eq:ratio-diverges}: when $c(\rho_s)>0$, we have
$-2\log F(\rho_s,\Pi\rho_s)\le 4c(\rho_s)/a_0$ by~\eqref{eq:FR-upper},
so $-2\log F>0$ and the ratio satisfies
\[
\frac{c(\rho_s)\log(a_0/\varepsilon_Q(\rho_s))}{-2\log F(\rho_s,\Pi\rho_s)}
\ge \frac{c(\rho_s)\log(a_0/\varepsilon_Q(\rho_s))}{4c(\rho_s)/a_0}
= \frac{a_0\log(a_0/\varepsilon_Q(\rho_s))}{4}\to\infty.\]
\end{proof}

\begin{remark}[Scope of Theorem~\ref{thm:FR-comparison}]\label{rem:FR-scope}
The comparison requires $\varepsilon_Q\le a_0 e^{-4/a_0}$, a genuine near-boundary
restriction (for $a_0=0.8$, this is approximately $\varepsilon_Q\le 5\times 10^{-3}$).
Under the dynamics of Proposition~\ref{prop:Davies}, this is satisfied once the
kernel population has decayed sufficiently, roughly after time
$t\ge k^{-1}\log(\varepsilon_0/(a_0 e^{-4/a_0}))$.
The ratio bound~\eqref{eq:ratio-diverges} shows that the logarithmic enhancement
grows unboundedly as \(\varepsilon_Q\to0\):
our explicit bound captures a $\log(a_0/\varepsilon_Q)\to\infty$ factor that
the Petz-recovery remainder does not.
\end{remark}

\section{Relation to the Literature}\label{sec:literature}

\paragraph{BKM metrics.}
The BKM quadratic form and its spectral representation are studied
in~\cite{Petz1996,Lesniewski1999}.
The approach here differs from standard BKM coercivity estimates in that the
argument does not bound the full form by the smallest eigenvalue of the
interpolating state. Instead, the pinching reduction isolates effective
\(2\times2\) blocks controlled by the coherence--population contrast.

\paragraph{MLSI and entropy methods.}
Modified logarithmic Sobolev inequalities for Davies generators are established
in~\cite{KastoryanoTemme2013,BardetCapelGaoLuciaPerezGarciRouze2021,WirthZhang2021}.
The present work is complementary: given a finite-time entropy bound of the form
in Assumption~\ref{ass:MLSI}, we derive corresponding entropy--activation bounds
near rank-deficient stationary states.

\paragraph{Support-sensitive coercivity and coherence entropy.}
Logarithmic entropy divergences near vanishing eigenvalues are classical.
The bounds obtained here are support-sensitive: the entropy cost of
cross-boundary coherence is amplified by the small population scale
\(\varepsilon_Q(\rho)\).
Standard estimates for the relative entropy of coherence
\[
C_{\rm rel}(\rho):=D(\rho\|\Pi\rho)
\]
relate it to norm-based coherence measures such as trace-distance or
\(l_1\)-coherence bounds
(see~\cite{BaumgratzCramerPlenio2014,ZhuHayashiChen2018,WinterYang2016,StreltsovAdessoPlenio2017}).
In contrast, Lemma~\ref{lem:BKM} yields the support-sensitive estimate
\[
C_{\rm rel}(\rho)
\ge
c(\rho)\log\!\Bigl(\frac{a_0}{\varepsilon_Q(\rho)}\Bigr),
\]
which becomes logarithmically enhanced near rank-deficient boundaries.
For Davies semigroups satisfying
Assumptions~\ref{ass:secular}--\ref{ass:rates},
the resulting activation bounds combine this coercivity estimate with
secular decoupling and population-rate control.

\paragraph{Recoverability.}
The Fawzi--Renner inequality~\cite{FawziRenner2015}
gives a universal lower bound on the relative entropy deficit under a channel
in terms of a recovery fidelity.
Theorem~\ref{thm:FR-comparison} complements this with an explicit lower bound
displaying logarithmic enhancement in the near-boundary regime
\(
\varepsilon_Q(\rho)\to0
\).

\section{Discussion}\label{sec:discussion}

\paragraph{Summary.}
Section~\ref{sec:entropy-activation} derives entropy--activation bounds from
a support-sensitive coherence modulus near rank-deficient boundaries.
Theorem~\ref{thm:split} separates coherence and population contributions
without requiring coherence dominance, while
Theorem~\ref{thm:absorbing} gives an explicit absorbing-regime estimate.
For Davies semigroups satisfying the additional assumptions of
Section~\ref{sec:main}, Corollary~\ref{thm:main}
applies these structural bounds along trajectories for which the local support,
coherence-dominance, and entropy-budget hypotheses remain valid.
The coherence--population decomposition of
Theorem~\ref{thm:CPS} applies more generally to arbitrary block-diagonal
reference states.

\paragraph{Limitations.}
The results do not apply to incoherent initial states (\(c_0=0\)),
to times beyond the coherence-dominance window \(T^*\),
or to models where Assumption~\ref{ass:secular} fails,
for instance due to nearly degenerate cross-boundary Bohr frequencies.
The local condition~\eqref{eq:LC} and the coherence-dominance condition~\eqref{eq:CD}
are genuine trajectory-level hypotheses.
The logarithmic enhancement mechanism is specific to the near-boundary regime
and does not yield a universal comparison with Pinsker-type inequalities.

\section{Conclusion}\label{sec:conclusion}

The present work studies entropy--activation bounds near rank-deficient
boundaries for finite-dimensional Davies-type dynamics.
The central mechanism is a support-sensitive BKM coercivity estimate obtained
by reducing, through pinching and data processing, to effective \(2\times2\)
BKM blocks controlled by the coherence--population contrast.

Under the stated local and dynamical hypotheses, this coercivity estimate
yields logarithmically enhanced entropy--activation bounds for trajectories
with nontrivial cross-boundary coherence.
The coherence--population decomposition theorem further identifies the
coercive term as the coherence contribution in an exact decomposition of
relative entropy, and leads to a comparison with the Fawzi--Renner
recoverability bound in the near-boundary regime.

\end{document}